
\documentclass[10pt,journal,twocolumn]{IEEEtran}

\usepackage{cite}
\usepackage{amsmath}
\usepackage{amsfonts}
\usepackage{amssymb}
\usepackage{dsfont}
\usepackage[amsmath,thmmarks]{ntheorem}
\usepackage{trfsigns}
\usepackage{multirow}
\usepackage{arydshln}
\usepackage{subfigure}
\usepackage{float}
\usepackage{xcolor}
\usepackage{url}
\usepackage{trfsigns}
\usepackage{multicol}

\usepackage{graphicx}

\DeclareGraphicsExtensions{.pdf} 

\theoremstyle{plain}
\newcommand{\hhnewtheorem}[2]{
  \newtheorem{#2}{#1}
  \newenvironment{#1}[1][CCC]{
    \begin{#2}[##1]
      \nopagebreak
      \begin{adjustwidth}{0.05\textwidth}{0.05\textwidth}}
      {\end{adjustwidth}
    \end{#2}}}
\newcommand{\hhnewproof}[3]{
  \newtheorem{#3}{#2}
  \newenvironment{#1}{
    \begin{#3}
      \nopagebreak
      \begin{adjustwidth}{0.05\textwidth}{0.05\textwidth}}
      {\end{adjustwidth}
    \end{#3}}}
\theoremseparator{.}
\hhnewtheorem{Theorem}{thm}
\hhnewtheorem{Proposition}{prop}
\hhnewtheorem{Claim}{claim}
\hhnewtheorem{Corollary}{cor}
\hhnewtheorem{Lemma}{lemma}
\theorembodyfont{\upshape}
\theoremsymbol{\ensuremath{\diamondsuit}}
\theoremseparator{.}
\hhnewtheorem{Example}{example}
\hhnewtheorem{Remark}{remark}
\hhnewtheorem{Definition}{defn}
\hhnewtheorem{Assumption}{assumption}
\theoremstyle{nonumberbreak}
\theoremseparator{:}
\hhnewtheorem{Algorithm}{algorithm}
\theoremstyle{plain}
\theoremseparator{}
\theoremsymbol{\openbox}
\theoremstyle{nonumberplain}
\hhnewproof{Proof}{Proof:}{proof}

\newcommand{\openbox}{\leavevmode
  \hbox to.77778em{
  \hfil\vrule
  \vbox to.675em{\hrule width.6em\vfil\hrule}
  \vrule\hfil}}

\addtolength{\textfloatsep}{-5mm}
\addtolength{\floatsep}{-4mm}


\newcommand{\tr}{\mathop{\rm tr\/}}
\newcommand{\diag}{\mathop{\rm diag\/}}
\newcommand{\E}{\mathsf{E}}
 
\newcommand{\R}{\mathbb{R}}

\newcommand{\trans}{\mathsf{T}}
\newcommand{\vm}[1]{\boldsymbol{#1}}
\newcommand{\Gpdf}{g}
\newcommand{\GpdfN}{\Gpdf_{\N}}
\newcommand{\GpdfM}{\Gpdf_{\M}}
\newcommand{\GMweight}{\omega}

\newcommand{\meanV}{\vm{\mu}}
\newcommand{\CovM}{\vm{\Sigma}}

\newcommand{\CovD}{\vm{D}}
\newcommand{\covD}{d}
\newcommand{\covDV}{\vm{\covD}}
\newcommand{\covDVr}{\covDV^{(\indK,\indL)}}
\newcommand{\EqVar}{\xi^{2}}
\newcommand{\EqStdr}{\xi_{\indK,\indL}}
\newcommand{\EqVarr}{\EqVar_{\indK,\indL}}

\newcommand{\I}{\vm{I}}
\newcommand{\Eeps}{\delta^{2}}

\newcommand{\xvar}{\sigma^{2}_{x}}

\newcommand{\NB}{R}
\newcommand{\nb}{r}

\newcommand{\Blen}{Q}
\newcommand{\blen}{q}
\newcommand{\PDF}{p}
\newcommand{\dx}{\mathrm{d}}
\newcommand{\estMA}{\vm{W}}

\newcommand{\estMrA}{\estMA_{\indK,\indL}}

\newcommand{\MSE}{\mathsf{mse}}
\newcommand{\MSEsu}{\MSE_{\mathsf{eq}}}

\newcommand{\Fb}{\beta}
\newcommand{\nvarT}{\sigma^{2}}

\newcommand{\indK}{k}
\newcommand{\K}{K}
\newcommand{\indL}{l}
\newcommand{\Lk}{L_{\indK}}
\newcommand{\onesV}{\vm{1}}

\newcommand{\mtxA}{\vm{A}}

\newcommand{\nvecT}{\vm{n}}
\newcommand{\nvec}{\tilde{\nvecT}}

\newcommand{\indN}{n}
\newcommand{\M}{M}
\newcommand{\N}{N}
\newcommand{\Nind}{n}
\newcommand{\cwL}{\N}

\newcommand{\PDFpost}{q}
\newcommand{\symbolX}{x}

\newcommand{\vecXT}{\vm{\symbolX}}
\newcommand{\vecXTr}{\vecXT_{\indK,\indL}}
\newcommand{\vecYT}{\vm{y}}
\newcommand{\vecX}{\tilde{\vecXT}}
\newcommand{\vecXr}{\vecX_{\indK,\indL}}
\newcommand{\vecY}{\tilde{\vecYT}}
\newcommand{\slangle}{\langle}
\newcommand{\srangle}{\rangle}
\newcommand{\dlangle}{\langle\langle}
\newcommand{\drangle}{\rangle\rangle}
\newcommand{\vecZT}{\vm{z}}
\newcommand{\vecZTr}{\vecZT_{\indK,\indL}}

\newcommand{\FreeEd}{f}
\newcommand{\NR}{u}
\newcommand{\nr}{a}
\newcommand{\nrb}{b}
\newcommand{\Alambda}{\lambda}
\newcommand{\AlambdaV}{\vm{\Alambda}}
\newcommand{\partF}{Z}
\newcommand{\partFcond}{\partF(\vecYT,\mtxA,\AlambdaV)}

\newcommand{\AXiNR}{\Xi^{(\NR)}_{\N}(\AlambdaV)}

\newcommand{\vecV}{\vm{v}}

\newcommand{\mtxQ}{\vm{\mtxQEle}}
\newcommand{\mtxQr}{\vm{\mtxQEle}_{\indK,\indL}}
\newcommand{\mtxQu}{\mtxQ^{(\NR)}}
\newcommand{\mtxQuS}{\mtxQ^{*}}
\newcommand{\mtxQEle}{Q}

\newcommand{\mtxQtil}{\tilde{\mtxQ}\vphantom{\mtxQ}}
\newcommand{\mtxQutil}{\mtxQtil^{(\NR)}}
\newcommand{\mtxQutilS}{\mtxQtil^{*}}

\newcommand{\funcT}{T}
\newcommand{\funcTu}{\funcT^{(\NR)}}
\newcommand{\funcTur}{\funcTu_{\indK,\indL}}
\newcommand{\funcG}{G}

\newcommand{\funcGu}{\funcG^{(\NR)}}

\newcommand{\AmtxA}{\vm{\Sigma}}
\newcommand{\MGF}{\phi}
\newcommand{\AMGF}{\MGF^{(\NR)}_{\indK,\indL}}
\newcommand{\AMGFN}{\AMGF}
\newcommand{\Qaa}{p}
\newcommand{\Qab}{q}
\newcommand{\Qtilaa}{\tilde{p}}
\newcommand{\Qtilab}{\tilde{q}}
\newcommand{\cMGFuN}{C^{(\NR)}_{\N}}

\newcommand{\funcH}{h}
\newcommand{\funcHN}{\funcH_{\N}}

\newcommand{\nvecsu}{\vm{\eta}}


\begin{document}

\title{Analysis of MMSE Estimation for \\
Compressive Sensing of Block Sparse Signals%
\thanks{A previous version of this paper was published in
IEEE ITW'11, Paraty Brazil, October 16-20, 2011.  There was,
however, a mistake in the main result and proof, albeit the 
main conclusion given in 
Corollary~\ref{cor:mmse_equivalence}
of the paper was still correct.  The results and 
proofs have been corrected in this arXiv version.}
}

\author{\IEEEauthorblockN{Mikko Vehkaper{\"a}, 
Saikat Chatterjee, and Mikael Skoglund}       

\IEEEauthorblockA{
School of Electrical Engineering and the ACCESS Linnaeus Center \\
KTH Royal Institute of Technology, SE-10044, Stockholm, Sweden \\
E-mail: \url{{mikkov, sach, skoglund}@kth.se}}}

\maketitle

\begin{abstract}
\emph{Minimum mean square error} (MMSE) estimation of 
\emph{block sparse} signals 
from noisy linear measurements is considered.  
Unlike in the standard \emph{compressive sensing} setup
where the non-zero entries of the signal are
independently and uniformly distributed across the vector 
of interest, the information bearing components appear here in 
large mutually dependent \emph{clusters}.
Using the \emph{replica method} from statistical physics,
we derive a simple closed-form solution for the 
MMSE obtained by the optimum estimator.
We show that the MMSE is a version
of the \emph{Tse-Hanly formula} with 
system load and MSE scaled by parameters that depend on the 
\emph{sparsity pattern} of the source.
It turns out that 
this is equal to the MSE obtained by a \emph{genie-aided} MMSE estimator 
which is informed in advance about the exact locations of the non-zero 
blocks.  The asymptotic results obtained by the non-rigorous 
replica method are found to have an excellent agreement
with finite sized numerical simulations.

\end{abstract}

\section{Introduction}

\emph{Compressive sensing} (CS) 
\cite{Candes-Romberg-Tao-2006, Donoho-2006} tackles the 
problem of recovering a high-dimensional \emph{sparse} vector
from a set of linear measurements.  
Typically the number of observations
is much less than the number of elements in the vector 
of interest, making naive reconstruction attempts inefficient.
In addition to being an under-determined problem, 
the measurements may suffer from additive noise.
Under such conditions, the signal model for the 
noisy CS measurements can be written as
\begin{equation}
\vspace*{-0.1ex}
\vecYT = \mtxA \vecXT + \nvecT \in \R^{\M},
\vspace*{-0.1ex}
\label{eq:sysmodel_w_A}
\end{equation}
where $\vecXT\in\R^{\N}$ is the sparse vector of interest,
$\mtxA\in\R^{\M \times \N}$ the measurement
matrix, and $\nvecT\in\R^{\M}$ 
represents the measurement noise.
By assumption, $\M < \N$ and 
only some of the elements of $\vecXT$  are non-zero.
The task of CS is then to infer $\vecXT$, given 
$\mtxA$, $\vecYT$ and possibly some information about the 
sparsity of $\vecXT$ and the statistics of the 
noise $\nvecT$.  

In this paper, the vector $\vecXT$ is assumed 
to have a special \emph{block sparse} structure.
Such sparsity patterns have recently been found, e.g., 
in multiband signals and 
multipath communication channels (see, e.g.,
\cite{Stojnic-2010, Stojnic-Parvaresh-Hassibi-2009,
Eldar-Kuppinger-Bolcskei-2010, Baraniuk-2010} 
and references therein).
More precisely, the source is considered to be 
$\K$ \emph{block sparse} so that for any realization of $\vecXT$, its 
information bearing entries occur in \emph{at most} $\K$ 
non-overlapping \emph{clusters}.
This is markedly different from the conventional 
sparsity assumption in CS, where the individual 
non-zero components 
are independently and uniformly distributed over $\vecXT$.
Given the $\K$ block sparse source, we study the
\emph{minimum mean square error} (MMSE)
estimation of $\vecXT$, assuming full knowledge 
of the statistics of the input $\vecXT$ and the noise $\nvecT$.
Albeit this is an optimistic scenario for 
practical CS problems, it provides a lower bound on the MSE 
for any other reconstruction method.  
Also, knowing the benefits of having the 
statistics of the system at the estimator 
gives a hint how much the sub-optimum blind schemes could improve if 
they were to learn the parameters of the problem.

The main result of the paper is the 
\emph{closed-form MMSE} for the CS of block sparse signals.  
The solution turns out to be of a particularly simple form, namely,
the \emph{Tse-Hanly formula} \cite{Tse-Hanly-1999} 
where the system load and MSE are scaled by  parameters
that depend on the \emph{sparsity pattern} of the source.
This is found to be equal to the MSE obtained by a genie-aided
MMSE estimator that is informed in advance 
the locations of the non-zero blocks.
The result implies that if the statistics of the 
block sparse CS problem are known, the MMSE
is independent of the knowledge about 
the positions of the non-vanishing blocks.

Finally, we remark that the analysis in the paper are obtained via
the \emph{replica method} (RM) from statistical physics.
Albeit the RM is non-rigorous, it has
been used with great success for the \emph{large system} analysis of, e.g.,
multi-antenna systems
\cite{Moustakas-Simon-2003,Muller-2003},
code division multiple access 
\cite{Tanaka-2002, Guo-Verdu-2005}, vector precoding \cite{MGM-2008},
iterative receivers \cite{vehkapera-thesis-short-2010} and 
compressed sensing
\cite{Rangan-Fletcher-Goyal-isit-2009,
Guo-Baron-Shamai-allerton2009,
Kabashima-Wadayama-Tanaka-isit2010,
Tanaka-Raymond-2010}.
The main difference here compared to 
\cite{Tanaka-2002, Guo-Verdu-2005, Muller-2003, MGM-2008,
vehkapera-thesis-short-2010, 
Rangan-Fletcher-Goyal-isit-2009,
Guo-Baron-Shamai-allerton2009,
Kabashima-Wadayama-Tanaka-isit2010,
Tanaka-Raymond-2010} is that
the elements of the $\K$ block sparse vector $\vecXT$
are \emph{neither independent nor identically distributed}. 
This requires a slight modification to the standard replica treatment,
as detailed in the Appendix.

\subsection{Notation}

The probability density function (PDF)  of 
a random vector (RV) $\vecXT \in \R^{\cwL}$ 
(assuming one exists) is written 
as $\PDF(\vecXT)$ and conditional densities are denoted 
$\PDF(\vecXT \mid \cdots)$. 
The related PDFs \emph{postulated} by the estimator are denoted 
$\PDFpost(\vecX)$ and $\PDFpost(\vecX \mid \cdots)$, respectively. 
For further discussion on the
so-called \emph{generalized posterior mean estimation}
using true and postulated probabilities, see for example,
\cite{Tanaka-2002, Guo-Verdu-2005, 
vehkapera-thesis-short-2010}.
We denote
$\vecYT\sim\PDF(\vecYT)=\GpdfN(\vecYT \mid \meanV;\,\CovM)$
for a RV $\vecYT$ that is drawn according to
the $\cwL$-dimensional Gaussian density 
$\GpdfN(\vecYT \mid \meanV;\,\CovM)$
with mean $\meanV\in\R^{\cwL}$ and
covariance $\CovM\in\R^{\cwL\times\cwL}$.  
For a vector $\vecXT$ that is drawn according to a
\emph{Gaussian mixture} density, we have
\begin{equation}
\vecXT\sim\PDF(\vecXT) =
\sum_{r=1}^{R}\GMweight_{r}\GpdfN(\vecXT \mid \meanV_{r};\,\CovM_{r}),
\label{eq:GMpdf}
\end{equation}
where the density parameters $\GMweight_{r}$ satisfy
$\sum_{r=1}^{R}\GMweight_{r} = 1$ and
$\GMweight_{r}\geq 0$ for all $r=1,\ldots,R.$

We write $\onesV_{\Blen}\in\R^{\Blen}$
for the all-ones vector having $\Blen$ elements, 
and given vectors 
$\{\vm{d}_{\nb}\in\R^{\N}\}_{\nb=1}^{\NB}$, 
the diagonal matrix
$\vm{D}=\diag(\vm{d}_{1},\ldots,\vm{d}_{\NB})\in\R^{\N\NB\times\N\NB}$ 
has vector
$[\vm{d}_{1}^{\trans} \; \cdots \;\,\vm{d}_{\NB}^{\trans}]^{\trans}
\in\R^{\N\NB}$
on the main diagonal and zeros elsewhere. 
Superscript $^{\trans}$ denotes the transpose of a matrix.

\section{System Model}

\subsection{Block Sparsity}

\label{sec:block_sparsity}

Let $\cwL = \Blen\NB$, where $\Blen,\NB$ are positive integers, be the 
length of the sparse vector $\vecXT$. Furthermore, let $\vecXT$
be composed of $\NB$ equal length 
sub-vectors $\{\vecXT_{\nb}\}_{\nb=1}^{\NB}$, that is,
\begin{equation}
\vecXT = 
\begin{bmatrix}
\vecXT_{1} \\
\vdots \\
\vecXT_{\NB}
\end{bmatrix} \in \R^{\cwL},
\;\text{ where }\;
\vecXT_{\nb} = 
\begin{bmatrix}
\symbolX_{\nb, 1} \\
\vdots \\
\symbolX_{\nb, \Blen}
\end{bmatrix} \in \R^{\Blen}.
\label{eq:x_block_vector}
\end{equation}
Instead of considering \emph{strict block sparsity}
where some of the sub-vectors $\{\vecXT_{\nb}\}_{\nb=1}^{\NB}$
are \emph{exactly} equal to zero
\cite{Stojnic-2010, Stojnic-Parvaresh-Hassibi-2009,
Eldar-Kuppinger-Bolcskei-2010, 
Baraniuk-2010}, we let
$\vecXT$ be drawn from 
the Gaussian mixture density
\begin{equation}
\PDF(\vecXT)
= \sum_{\indK=1}^{\K}
\sum_{\indL=1}^{\Lk}
\GMweight_{\indK,\indL}\GpdfN(\vecXT \mid \vm{0};\,\CovD_{\indK,\indL}),
\label{eq:pdf_of_x}
\end{equation}
where $\K$ is an integer,
\begin{IEEEeqnarray}{l}
\Lk = \binom{\NB}{\indK}, \quad
\GMweight_{\indK} = \sum_{\indL=1}^{\Lk}\GMweight_{\indK,\indL}
\quad \text{ and } \quad
 \sum_{\indK=1}^{\K}
 \GMweight_{\indK} = 1.
 \IEEEeqnarraynumspace
\label{eq:GMparameters}
\end{IEEEeqnarray}
Here $\GMweight_{\indK}$ denotes the probability of observing 
$\indK$ information bearing blocks in a realization of $\vecXT$, 
$\Lk$ the number of combinations how such $\indK$ blocks 
can be arranged in $\vecXT$, and 
$\GMweight_{\indK,\indL}\geq 0$ the probabilities related to these
arrangements.
In the following, the input of
\eqref{eq:sysmodel_w_A} in the
event that $\vecXT$ is drawn according to the $(\indK,\indL)$th mixture
density is denoted $\vecXTr$.

To impose a block sparse structure on the vector of interest,
the diagonal covariance matrices in \eqref{eq:pdf_of_x}
\begin{IEEEeqnarray}{rCl}
\CovD_{\indK,\indL} &=& \diag(\covDVr_{1},\ldots,\covDVr_{\NB}),
\qquad \indL = 1,\ldots,\Lk, 
\IEEEeqnarraynumspace
\label{eq:Dr_matrix_defn}
\end{IEEEeqnarray}
are required to be distinct for all 
$(\indK,\indL) \neq (\indK',\indL')$ and taking
only two values on the block diagonals 
\begin{equation}
\covDVr_{\nb} = 
 \onesV_{\Blen}\Eeps \; \text{ and }\;
\covDVr_{\nb} =  \onesV_{\Blen}\xvar,
 \quad \forall \nb,\indK,\indL.
 \end{equation}
Here $\Eeps$ and $\xvar$ represent the expected
signal powers of the sparsity inducing and
information bearing components, respectively.
With these assumptions, the per-component variance
\begin{equation}
\frac{1}{\cwL}\E\|\vecXTr\| =
\frac{1}{\cwL}\tr(\CovD_{\indK,\indL})
= \frac{\indK}{\NB} \xvar + \frac{\NB - \indK}{\NB} \Eeps,
\label{k_sparse_Dkl}
\end{equation}
is independent of the permutation index $\indL = 1,\ldots,\Lk$.
\begin{defn}
	\label{defn:approx_K_block_sparse}
Let $\xvar = 1$ and assume that $\vecXT$ is a RV with 
density \eqref{eq:pdf_of_x}.  We say $\vecXT$ is an
\emph{approximately} $\K$ block sparse RV if
$\Eeps \ll \xvar$ and all  matrices
$\{\CovD_{\indK,\indL}\}$ 
satisfy \eqref{eq:Dr_matrix_defn}~--~\eqref{k_sparse_Dkl}.
If $\Eeps\to 0^{+}$, we simply say that $\vecXT$ is 
a $\K$ block sparse RV.
\end{defn}

\subsection{Posterior Mean Estimation}
\label{eq:pme_w_A}

Let $\vecXT$ be drawn according to 
\eqref{eq:pdf_of_x} and assume that we observe the noisy measurements
$\vecYT$ in \eqref{eq:sysmodel_w_A}.
We assume that the noise 
$\nvecT\sim\GpdfM(\nvecT\mid\vm{0};\, \nvarT\I_{\M})$
is Gaussian and independent of the signal $\vecXT$ and 
the measurement matrix $\mtxA$.
Furthermore, we let $\mtxA$ be independent of $\vecXT$ 
with independent identically distributed
(IID) Gaussian entries with zero mean and variance $1/\M$.

Given the above assumptions, let
\begin{equation}
\vecY = \mtxA \vecX + \nvec \in \R^{\M},
\label{eq:post_sysmodel_w_A}
\end{equation}
be the signal model postulated by the estimator.
We assign the postulated densities
$\PDFpost(\vecX)$ and 
$\PDFpost(\nvec)$ 
to the input and noise vectors, respectively.
In the following, $\PDFpost(\vecY \mid \mtxA,\vecX)=
\PDFpost(\vecY=\vecYT \mid \mtxA,\vecX)$ means 
that the realizations of the postulated measurement 
vectors $\vecY$ match the outputs $\vecYT$ 
of the true signal model \eqref{eq:sysmodel_w_A}, but it can be 
that the input $\PDFpost(\vecX)\neq \PDF(\vecXT)$ and noise
$\PDFpost(\nvec) \neq \PDF(\nvecT)$ statistics are mismatched.
If we define an expectation operator 
\begin{equation}
\dlangle \cdots \drangle_{\PDFpost}
\triangleq
\int \cdots \, \PDFpost(\vecX \mid \vecY, \vm{A}) 
\dx \vecX 
= \int
\cdots
\frac{\PDFpost(\vecY \mid \mtxA, \vecX ) \PDFpost(\vecX)}
{\PDFpost(\vecY\mid \mtxA)} \dx \vecX,
\label{eq:mmse_joint_est_w_A}
\end{equation}
a (mismatched) MMSE estimate of $\vecXT$ for 
the linear model \eqref{eq:sysmodel_w_A}, given 
$\vecYT, \mtxA, \PDFpost(\vecX)$ and 
$\PDFpost(\vecY \mid \mtxA,\vecX)$, is then simply
$\dlangle \vecX \drangle_{\PDFpost}$.
%

\begin{lemma}
The MMSE estimate of $\vecXT$ for the  signal model 
\eqref{eq:sysmodel_w_A} reads
$\dlangle \vecX \drangle_{\PDF}$, i.e., $\PDFpost = \PDF$ for 
all densities. 
\label{defn:mmse-optimum}
\end{lemma}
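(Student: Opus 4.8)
The plan is to establish that the true posterior mean $\dlangle \vecX \drangle_{\PDF}$ (i.e.\ the estimator obtained by setting $\PDFpost = \PDF$ for both the input and the noise densities) is the unique minimizer of the mean square error among \emph{all} measurable estimators $\hat{\vecXT}(\vecYT,\mtxA)$. This is the classical orthogonality/conditional-expectation characterization of MMSE, specialized to the linear model \eqref{eq:sysmodel_w_A} with the block-sparse prior \eqref{eq:pdf_of_x} and the Gaussian noise assumption; the point of the lemma is simply to pin down that the \emph{optimal} choice of postulated densities in the generalized posterior-mean framework of \eqref{eq:mmse_joint_est_w_A} is the matched one.

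\emph{First} I would fix the realizations $\mtxA$ and $\vecYT$ and consider an arbitrary estimator $\hat{\vecXT}=\hat{\vecXT}(\vecYT,\mtxA)$, writing the conditional MSE as
\begin{equation}
\E\bigl[\,\|\vecXT - \hat{\vecXT}\|^{2} \,\big|\, \vecYT,\mtxA\,\bigr]
= \E\bigl[\,\|\vecXT - \dlangle\vecX\drangle_{\PDF}\|^{2}\,\big|\,\vecYT,\mtxA\,\bigr]
+ \bigl\|\dlangle\vecX\drangle_{\PDF} - \hat{\vecXT}\bigr\|^{2},
\end{equation}
where the cross term vanishes because $\dlangle\vecX\drangle_{\PDF}=\E[\vecXT\mid\vecYT,\mtxA]$ is $(\vecYT,\mtxA)$-measurable and the inner expectation of $\vecXT-\E[\vecXT\mid\vecYT,\mtxA]$ against it is zero. \emph{Second}, I would note that the true conditional density is, by Bayes' rule,
\begin{equation}
\PDF(\vecXT\mid\vecYT,\mtxA)
= \frac{\PDF(\vecYT\mid\mtxA,\vecXT)\,\PDF(\vecXT)}{\PDF(\vecYT\mid\mtxA)},
\end{equation}
so the conditional mean $\E[\vecXT\mid\vecYT,\mtxA]$ is exactly the operator in \eqref{eq:mmse_joint_est_w_A} evaluated at $\PDFpost=\PDF$, namely $\dlangle\vecX\drangle_{\PDF}$. \emph{Third}, since the second term on the right-hand side is nonnegative and the first term is independent of $\hat{\vecXT}$, the conditional MSE is minimized pointwise by $\hat{\vecXT}=\dlangle\vecX\drangle_{\PDF}$; taking the outer expectation over $(\vecYT,\mtxA)$ preserves the inequality, which yields the claim for the unconditional MSE as well.

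\emph{The one technical point} to handle carefully is identifying the matched likelihood $\PDFpost(\vecY\mid\mtxA,\vecX)=\PDF(\vecYT\mid\mtxA,\vecXT)$ with the Gaussian density induced by the true noise model: because $\nvecT\sim\GpdfM(\nvecT\mid\vm{0};\,\nvarT\I_{\M})$ is independent of $\vecXT$ and $\mtxA$, we have $\PDF(\vecYT\mid\mtxA,\vecXT)=\GpdfM(\vecYT\mid\mtxA\vecXT;\,\nvarT\I_{\M})$, and the convention in \eqref{eq:mmse_joint_est_w_A} that the postulated outputs match the true outputs $\vecY=\vecYT$ makes the substitution $\PDFpost=\PDF$ unambiguous. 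With this identification the argument is entirely the standard proof that the conditional expectation is the $L^{2}$-optimal predictor, so \textbf{there is no genuine obstacle}; the substance of the lemma is notational, establishing that the Bayes-optimal estimator is recovered from the generalized posterior-mean machinery precisely when the postulated densities coincide with the true ones.
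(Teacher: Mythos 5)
Your proof is correct. The paper itself omits the proof of this lemma entirely (stating only that it ``follows by simple algebra''), and your argument --- the classical orthogonality decomposition showing the conditional mean $\E[\vecXT \mid \vecYT, \mtxA]$ is the $L^{2}$-optimal estimator, together with the Bayes-rule identification of that conditional mean with $\dlangle \vecX \drangle_{\PDF}$ --- is precisely the standard argument the authors are alluding to, so there is nothing to fill in or correct.
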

\begin{proof}
The result follows by simple algebra and is omitted due to space constraints.
\end{proof}

\begin{prop}
\label{thm:mmse_estimator}
The MMSE estimate of the 
$\K$ block sparse signal $\vecXT$ with density \eqref{eq:pdf_of_x},
given noisy measurements \eqref{eq:sysmodel_w_A}, reads
\begin{equation}
\dlangle \vecX \drangle_{\PDF} =
\sum_{\indK=1}^{\K}
\sum_{\indL=1}^{\Lk}
\frac{\GMweight_{\indK,\indL}\PDF_{\indK,\indL}(\vecYT\mid\mtxA)}
{\PDF(\vecYT\mid\mtxA)} \estMrA\, \vecYT,
\label{eq:optimum_mmse_w_A}
\end{equation}
where
\begin{IEEEeqnarray}{rCl}
\PDF_{\indK,\indL}(\vecYT\mid\mtxA) &\triangleq& 
\GpdfM(\vecYT \mid \vm{0};\, 
\mtxA \CovD_{\indK,\indL} \mtxA^{\trans}+\nvarT\I_{\M}), \\
\PDF(\vecYT\mid\mtxA) &=& 
\sum_{\indK=1}^{\K}
\sum_{\indL=1}^{\Lk}
\GMweight_{\indK,\indL}\PDF_{\indK,\indL}(\vecYT\mid\mtxA),
\end{IEEEeqnarray}
and
\begin{equation}
\estMrA = \CovD_{\indK,\indL} \mtxA^{\trans}
\left(\mtxA \CovD_{\indK,\indL} \mtxA^{\trans}
+ \nvarT\I_{\M} \right)^{-1}.
\end{equation}

\end{prop}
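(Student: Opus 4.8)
The plan is to invoke Lemma~\ref{defn:mmse-optimum}, which identifies the sought estimate with the posterior mean $\dlangle \vecX \drangle_{\PDF}$ taken under the \emph{true} mixture prior \eqref{eq:pdf_of_x} and the true Gaussian likelihood induced by \eqref{eq:sysmodel_w_A}. The essential observation is that the prior is a finite Gaussian mixture while the channel $\vecYT = \mtxA\vecXT + \nvecT$ is linear with additive Gaussian noise independent of $\vecXT$; hence, \emph{conditioned on} $\mtxA$ and on the event that the $(\indK,\indL)$th component is active, the pair $(\vecXTr,\vecYT)$ is jointly zero-mean Gaussian. The posterior therefore remains a Gaussian mixture, and I would compute its mean by decomposing it over the active component and then applying the standard conditional-Gaussian formulas inside each component.

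First I would expand the posterior via Bayes' rule, substituting the mixture prior \eqref{eq:pdf_of_x}, to obtain
\[
\PDF(\vecXT \mid \vecYT, \mtxA) = \sum_{\indK=1}^{\K}\sum_{\indL=1}^{\Lk} \frac{\GMweight_{\indK,\indL}\, \PDF(\vecYT \mid \mtxA, \vecXT)\, \GpdfN(\vecXT \mid \vm{0};\, \CovD_{\indK,\indL})}{\PDF(\vecYT \mid \mtxA)} .
\]
The next step is to factor each summand as a per-component marginal likelihood times a per-component posterior, $\PDF(\vecYT \mid \mtxA, \vecXT)\,\GpdfN(\vecXT \mid \vm{0};\,\CovD_{\indK,\indL}) = \PDF_{\indK,\indL}(\vecYT \mid \mtxA)\,\PDF_{\indK,\indL}(\vecXT \mid \vecYT, \mtxA)$. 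Here $\PDF_{\indK,\indL}(\vecYT \mid \mtxA)$ arises from marginalizing $\vecXTr$ out of the joint density; since $\vecYT = \mtxA\vecXTr + \nvecT$ is a linear image of a zero-mean Gaussian plus independent Gaussian noise, this integral is again Gaussian and yields exactly $\GpdfM(\vecYT \mid \vm{0};\, \mtxA\CovD_{\indK,\indL}\mtxA^{\trans} + \nvarT\I_{\M})$, matching the claimed expression. The within-component posterior $\PDF_{\indK,\indL}(\vecXT \mid \vecYT, \mtxA)$ is the conditional law of $\vecXTr$ given $\vecYT$ in this jointly Gaussian model; reading off its mean from the textbook conditional-Gaussian (Wiener) formula with $\mathrm{Cov}(\vecXTr) = \CovD_{\indK,\indL}$, $\mathrm{Cov}(\vecXTr,\vecYT) = \CovD_{\indK,\indL}\mtxA^{\trans}$ and $\mathrm{Cov}(\vecYT) = \mtxA\CovD_{\indK,\indL}\mtxA^{\trans} + \nvarT\I_{\M}$ gives precisely $\estMrA\vecYT$.

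Finally I would integrate $\vecXT$ against the decomposed posterior: each per-component posterior integrates to its mean $\estMrA\vecYT$, the common denominator is recognized as the mixture $\PDF(\vecYT \mid \mtxA)=\sum_{\indK,\indL}\GMweight_{\indK,\indL}\PDF_{\indK,\indL}(\vecYT \mid \mtxA)$ of per-component marginals, and the factors $\GMweight_{\indK,\indL}\PDF_{\indK,\indL}(\vecYT \mid \mtxA)/\PDF(\vecYT \mid \mtxA)$ are exactly the posterior component probabilities. Collecting these reproduces \eqref{eq:optimum_mmse_w_A}. I do not expect a genuine obstacle, since the computation is entirely standard once the mixture structure is exploited; the only points demanding care are (i) verifying joint Gaussianity within each component, so that both the Gaussian marginalization and the conditional-mean formula apply verbatim, and (ii) confirming that the per-component normalizers assemble into the single mixture denominator $\PDF(\vecYT \mid \mtxA)$, so that the posterior weights are correctly normalized.
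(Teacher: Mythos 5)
Your proof is correct, and it reaches the result by a genuinely different computational route than the paper, even though both start from Lemma~\ref{defn:mmse-optimum} and exploit the same per-component mixture decomposition. The paper evaluates the posterior-mean integrals explicitly using the Gaussian integral identities \eqref{eq:Gint1}--\eqref{eq:Gint2}, which naturally produce the answer in information form, $(\CovD_{\indK,\indL}^{-1}+\nvarT^{-1}\mtxA^{\trans}\mtxA)^{-1}\nvarT^{-1}\mtxA^{\trans}\vecYT$, together with determinant prefactors; it then needs the Woodbury-type inversion identity to convert this into the measurement-domain filter $\estMrA$ and Sylvester's determinant identity to recognize the normalizing constants as the $\M$-dimensional Gaussian $\PDF_{\indK,\indL}(\vecYT\mid\mtxA)$. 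You instead observe that $(\vecXTr,\vecYT)$ is jointly zero-mean Gaussian within each component and read off both the marginal likelihood and the conditional mean from the covariance-form (Wiener) formulas, so the claimed expressions appear directly with no matrix-identity manipulation. Your route is shorter and has one genuine technical advantage: it never requires $\CovD_{\indK,\indL}^{-1}$ to exist, so it covers singular component covariances (the strictly block sparse limit $\Eeps\to 0^{+}$) without a separate limiting argument, whereas the paper's completing-the-square computation presupposes invertible component covariances. What the paper's route buys in exchange is self-containedness at the level of elementary integrals, and the identities \eqref{eq:Gint1}--\eqref{eq:Gint2} it introduces here are reused verbatim in the replica calculation of the Appendix, which is presumably why the authors chose that presentation.
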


\begin{proof}
	Let $\vm{A}\in\R^{\N\times\N}$ be a symmetric positive 
	definite matrix.  Then, using Gaussian integrals
	\begin{IEEEeqnarray}{rCl}
	\label{eq:Gint1}
	 e^{\frac{1}{2}\vm{b}^{\trans}\vm{A}^{-1}\vm{b}} &=& 
	 \sqrt{\frac{\det(\vm{A})}{2 \pi^{N}}}\int 
	 e^{-\frac{1}{2}\vecXT^{\trans}\vm{A}\vecXT 
	 + \vm{b}^{\trans} \vecXT}
	\dx \vecXT,  \\
	e^{\frac{1}{2}\vm{b}^{\trans}\vm{A}^{-1}\vm{b}} \vm{A}^{-1}\vm{b} 
	&=& \sqrt{\frac{\det(\vm{A})}{2 \pi^{N}}}
	\int \vecXT e^{-\frac{1}{2}\vecXT^{\trans}\vm{A}\vecXT + \vm{b}^{\trans} \vecXT}
	\dx \vecXT, 	\IEEEeqnarraynumspace
	\label{eq:Gint2}
	\end{IEEEeqnarray}
	where $\vm{x},\vm{b}\in\R^{\N}$ on 
		Lemma~\ref{defn:mmse-optimum}, along with the identities
	\begin{IEEEeqnarray}{l}
	\I - \vm{U}
	\left(\vm{B}^{-1} + \vm{V}\vm{U}
	\right)^{-1} \vm{V}
	=\left(\I + \vm{U}\vm{B}\vm{V}
	\right)^{-1}, \\
	\det\big(\vm{C}^{-1} + 
	\vm{U}\vm{V}^{\trans} \big) = 
	\det\big(\I + \vm{V}^{\trans}\vm{C}^{-1}\vm{U}
	\big) \det\big(\vm{C}^{-1}\big),
	\IEEEeqnarraynumspace
	\end{IEEEeqnarray}
	where all matrices are assumed to be of proper 
	size and $\vm{B},\vm{C}$ invertible,  yields 
	Proposition~\ref{thm:mmse_estimator}.
\end{proof}

Given the MMSE estimate of Proposition~\ref{thm:mmse_estimator},
we are now interested in computing the per-component MSE 
\begin{equation}
\MSE(\nvarT) = 
\E\| 
\vecXT - \dlangle \vecX \drangle_{\PDF} \|^{2} / \N,
\label{eq:mse_defn}
\end{equation}
when the dimensions of  
$\mtxA$ grow large with 
fixed ratio  $\Fb = \N / \M$, and the number of 
blocks $\NB = \cwL / \Blen$ stays finite.
The desired result is obtained in two steps: 
1) the replica method is used in
Sec.~\ref{sec:eq_awgn_channel}
to show that the original 
problem can be transformed to a set of simpler ones in the 
large system limit; 2) the solutions to the transformed  
problems are given in Sec.~\ref{sec:eq_awgn_channel_mmse}.

\section{Main Results}

\subsection{Equivalent AWGN Channel}
\label{sec:eq_awgn_channel}

Let the indexes 
$\indK = 1,\ldots,\K$ and $\indL = 1,\ldots,\Lk$
be as in the previous section.
Define a set of additive white Gaussian noise (AWGN) channels
for all $\indK$ and $\indL$
\begin{equation}
\vecZTr = \vecXTr + \EqStdr\vm{\nvecsu} \in \R^{\cwL}, \quad 
\vm{\nvecsu} \sim \GpdfN(\vm{\nvecsu} \mid \vm{0};\, \I_{\cwL}),
\label{eq:awgn_channel}
\end{equation}
where $\vecXTr$ is a zero-mean Gaussian RV 
with covariance $\CovD_{\indK,\indL}$ and $\EqStdr > 0$.
Let the events of observing the $(\indK,\indL)$th channel
\eqref{eq:awgn_channel} be independent 
and occur with probability $\GMweight_{\indK,\indL}$ for all 
$\indK$ and $\indL$.
Furthermore, let
\begin{IEEEeqnarray}{rCl}
\slangle \cdots \srangle_{\PDFpost}^{(\indK,\indL)}
&\triangleq&
\int \cdots \, \PDFpost(\vecXr \mid \vecZTr) 
\dx \vecXr  \IEEEnonumber\\
&=&
\int
\cdots
\frac{\PDFpost(\vecZTr \mid \vecXr ) \PDFpost(\vecXr)}
{\PDFpost(\vecZTr)} \dx \vecXr,
\label{eq:mmse_joint_est_w_su}
\end{IEEEeqnarray}
be an expectation operator
similar to \eqref{eq:mmse_joint_est_w_A}, but related to 
the $(\indK,\indL)$th AWGN channel \eqref{eq:awgn_channel}.
The MMSE estimate of 
$\vecXTr$ given the channel outputs $\vecZTr$ is then given by
\begin{equation}
\slangle \vecXr \srangle_{\PDF}^{(\indK,\indL)} =
\CovD_{\indK,\indL} 
\left(\CovD_{\indK,\indL} + \EqVarr\I_{\N}  \right)^{-1}
\vecZTr.
\label{eq:mmse_output_vector}
\end{equation}
We denote the per-component 
MSE of the estimates 
$\slangle \vecXr \srangle_{\PDF}^{(\indK,\indL)}$ 
\begin{equation}
\MSEsu^{(\indK,\indL)}(\EqVarr) 
= \E \big\{\|\vecXTr\|^{2} 
- \E \|\slangle \vecXr \srangle^{(\indK,\indL)}_{\PDF}\|^{2}\big\} / \N,
\label{eq:mse_su_defn}
\end{equation}
where the expectation is w.r.t the joint distribution of all 
variables associated with \eqref{eq:awgn_channel}.
The per-component MSE averaged over 
the realizations of the channels \eqref{eq:awgn_channel}
is thus 
\begin{equation}
\MSEsu(\{\EqVarr\})
= \sum_{\indK=1}^{\K} \sum_{\indL=1}^{\Lk}
		\GMweight_{\indK,\indL}
		\MSEsu^{(\indK,\indL)}(\EqVarr).
\label{eq:mse_su_avg_defn}		
\end{equation}

\begin{claim}
\label{claim:decoup_mmse}
	Let $\vm{x}$
	be (approximately) $\K$ block sparse as given in Definition~\ref{defn:approx_K_block_sparse}.
	In the large system limit when $\M, \N, \Blen \to \infty$ with finite 
	and fixed ratios  $\Fb = \N / \M$ and $\NB = \N / \Blen$,
	\begin{equation}
		\MSE(\nvarT) \to 
		\MSEsu(\{\EqVarr\}),
	\end{equation}
	where $\MSE(\nvarT)$ is given in \eqref{eq:mse_defn}
	and $\MSEsu(\{\EqVarr\})$ in \eqref{eq:mse_su_avg_defn}.
	The noise variance $\EqVarr> 0$, on the other hand, 
	is the solution to the fixed 
	point equation
	\begin{IEEEeqnarray}{rCl}
		\EqVarr &=& \nvarT + \Fb\, \MSEsu^{(\indK,\indL)}(\EqVarr),
		\label{eq:fp-su-xi}
	\end{IEEEeqnarray}
	for all $\indK$ and $\indL$.
\end{claim}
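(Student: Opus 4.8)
The plan is to compute $\MSE(\nvarT)$ from the averaged free energy of the posterior normalization and to show that, in the stated limit, the coupled $\N$-dimensional estimation problem collapses onto the independent AWGN channels of \eqref{eq:awgn_channel}. Since the MMSE is attained at $\PDFpost=\PDF$ by Lemma~\ref{defn:mmse-optimum}, I would work with the partition function $\partF(\vecYT,\mtxA)=\int\PDFpost(\vecY=\vecYT\mid\mtxA,\vecX)\PDFpost(\vecX)\,\dx\vecX$, i.e.\ the denominator of \eqref{eq:mmse_joint_est_w_A}, and the free energy $-\tfrac{1}{\N}\E\log\partF$, where the average runs over $\mtxA$, $\nvecT$ and the true $\vecXT$. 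To read off the MSE I would couple a source field $\AlambdaV$ to the prior so that $\MSE(\nvarT)$ becomes a derivative of the free energy at $\AlambdaV=\vm{0}$, converting the estimation error into a moment of the replicated Gibbs measure.

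The core step is the replica identity $\E\log\partF=\lim_{\NR\to 0}\partial_{\NR}\log\E\partF^{\NR}$. For integer $\NR$ I would evaluate $\E\partF^{\NR}$ by first averaging over the Gaussian $\mtxA$; because its entries have variance $1/\M$, the average depends on the $\NR+1$ replicated signals (the quenched true vector plus $\NR$ postulated ones) only through their pairwise overlaps $\mtxQEle_{\nr\nrb}=\tfrac{1}{\N}\vecX_{\nr}^{\trans}\vecX_{\nrb}$. Introducing these overlaps and their conjugates as order parameters factorizes $\E\partF^{\NR}$ into a channel term and a prior term. The departure from the standard treatment, promised in the Appendix, enters here: the Gaussian-mixture prior \eqref{eq:pdf_of_x} is not a product measure, so the prior term becomes a sum over sparsity patterns $(\indK,\indL)$ and the order parameters must be carried with a block structure matched to $\CovD_{\indK,\indL}$ rather than as a single scalar overlap.

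Next I would pass to the limit $\M,\N,\Blen\to\infty$ at fixed $\Fb$ and finite $\NB$ and evaluate the integral by the saddle-point method under the replica-symmetric ansatz. Because there are only finitely many blocks, each of diverging length $\Blen$, the per-block quantities concentrate and the saddle-point equations separate pattern by pattern. I expect the replica-symmetric saddle to reproduce, for each component $(\indK,\indL)$, precisely a scalar AWGN channel of the form \eqref{eq:awgn_channel} whose MMSE is $\MSEsu^{(\indK,\indL)}$ in \eqref{eq:mse_su_defn}, while the conjugate (inverse-noise) order parameter at the saddle yields the effective variance $\EqVarr$ and the self-consistency \eqref{eq:fp-su-xi}, the load $\Fb$ entering through the measurement term. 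Sending $\NR\to 0$ and differentiating in $\AlambdaV$ should then give $\MSE(\nvarT)\to\MSEsu(\{\EqVarr\})$ as in \eqref{eq:mse_su_avg_defn}, with the weights $\GMweight_{\indK,\indL}$ descending directly from the mixture prior.

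The hard part will be the bookkeeping forced by the non-i.i.d.\ source, namely showing that the block-structured saddle point genuinely \emph{decouples} into one channel per pattern, so that a single scalar fixed point \eqref{eq:fp-su-xi} holds for each $(\indK,\indL)$ rather than a system coupled across components. Verifying that the replica-symmetric order parameters are block-diagonal with the correct per-block values, consistent with \eqref{k_sparse_Dkl}, and that the finite-$\NB$, large-$\Blen$ regime justifies both the concentration and the separation of the $\binom{\NB}{\indK}$ arrangements, is the crux; the Gaussian average, the $\NR\to 0$ continuation, and the extraction of the MSE are otherwise routine. As always with the replica method the argument is non-rigorous, resting on the replica-symmetric ansatz and on the exchange of limits in $\NR\to 0$.
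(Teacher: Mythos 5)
Your proposal follows essentially the same route as the paper's Appendix: the source-field partition function \eqref{eq:partf_cond_1} with the MSE read off from the Hessian of the free energy at $\AlambdaV=\vm{0}$ as in \eqref{eq:nabla_cov}, the replica trick over $\NR+1$ replicas with overlap order parameters and their conjugates, the mixture sum over $(\indK,\indL)$ carrying pattern-indexed (block-structured) order parameters, a replica-symmetric saddle point that reduces each pattern to the AWGN channel \eqref{eq:awgn_channel}, and the fixed point \eqref{eq:fp-su-xi} emerging from the conjugate inverse-noise parameter. This matches the paper's derivation step for step, with no genuine gaps beyond those inherent to the replica method itself.
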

\begin{proof}
The proof is based on the \emph{replica method} (see, e.g., 
\cite{Tanaka-2002, Guo-Verdu-2005, Muller-2003, Moustakas-Simon-2003, 
MGM-2008, vehkapera-thesis-short-2010,
Rangan-Fletcher-Goyal-isit-2009,
Guo-Baron-Shamai-allerton2009,
Kabashima-Wadayama-Tanaka-isit2010,
Tanaka-Raymond-2010} for similar results in
communication theory and signal processing) 
from statistical physics.  
The main difference to the standard approach 
is that here the elements of the input vector $\vecXT$ are 
\emph{neither independent nor identically distributed}.
Thus, the \emph{decoupling result} proved 
for the CDMA systems \cite{Guo-Verdu-2005} 
cannot be straightforwardly extended to our case.  
Alternative derivation is sketched below and in part in the Appendix.
%

To start, let us define a modified \emph{partition function} related to 
the posterior mean estimator \eqref{eq:mmse_joint_est_w_A} as
\cite{Merhav-isit-2010} 
\begin{equation}
\partFcond = 
\int
\e^{\AlambdaV^{\trans} \vecX}
\PDFpost(\vecYT\mid\mtxA, \vecX)
\PDFpost(\vecX) \dx \vecX,
\label{eq:partf_cond_1}
\end{equation}
where $\AlambdaV\in\R^{\N}$ is a constant vector.  
The posterior mean estimator of $\vecXT$ given 
in \eqref{eq:mmse_joint_est_w_A}
can then be written
as the \emph{gradient} with respect to (w.r.t) $\AlambdaV$
at $\AlambdaV = \vm{0}$
of the \emph{free energy}, i.e.,  
\begin{IEEEeqnarray}{rCl}
\label{eq:nabla_est}
\dlangle \vecX \drangle_{\PDFpost} 
&=& \nabla_{\AlambdaV} \log \partFcond\Big|_{\AlambdaV=\vm{0}}.
\end{IEEEeqnarray}
Similarly, if \eqref{eq:nabla_est} is the 
optimum MMSE estimate, that is
$\PDFpost = \PDF$, 
and we denote the \emph{free energy density} 
\begin{equation}
\FreeEd(\vecYT,\mtxA,\AlambdaV) = 
\frac{1}{\N} 
\log \partFcond,
\label{eq:FreeEd}
\end{equation}
the \emph{average per-component MMSE} is given by
\begin{equation}
\MSE 
= \tr \Big(\E \big\{ 
\nabla_{\AlambdaV\AlambdaV}^{2} \,
\FreeEd(\vecYT,\mtxA,\AlambdaV)\big|_{\AlambdaV=\vm{0}}
\big\}\Big),
\label{eq:nabla_cov}
\end{equation}
where the expectations are w.r.t. the joint density of
$(\vecYT,\mtxA)$.  
Unlike in \cite{Merhav-isit-2010}, however,
direct computation of the free energy (density) 
is not possible here.  
We thus resort to the non-rigorous RM to calculate 
\eqref{eq:FreeEd} and then use the relation \eqref{eq:nabla_cov}
to obtain the final result. 
The details are given in the Appendix.
\end{proof}

\subsection{Performance of MMSE Estimation of Block Sparse Signals}
\label{sec:eq_awgn_channel_mmse}

Claim~\ref{claim:decoup_mmse} asserts that the MSE of the 
estimator \eqref{eq:optimum_mmse_w_A} in the original setting 
\eqref{eq:sysmodel_w_A} can be obtained in the large system limit from \eqref{eq:mse_su_avg_defn}.  Given the Claim~\ref{claim:decoup_mmse}
holds, a bit of algebra gives the following proposition.

\begin{prop}
	\label{prop:mse}
	Let $\xvar = 1$,
	$\EqVarr>0$ and $\N\to\infty$.  Then 
	the per-component MSE \eqref{eq:mse_su_defn} 
	is given by
	\begin{IEEEeqnarray}{l}
	\label{eq:K-largeN-mse}
		\MSEsu^{(\indK,\indL)}(\EqVarr) 
		 = 
	\frac{\indK}{\NB} \frac{\EqVarr}{\EqVarr+1} 
	+ \frac{\NB - \indK}{\NB} \frac{\Eeps\EqVarr}{\EqVarr+\Eeps},
	\end{IEEEeqnarray}
	where $\EqVarr$ is the solution of 
	\eqref{eq:fp-su-xi}.
	When the source is strictly block sparse, that is $\Eeps\to 0^{+}$, 
	\begin{equation}
	\EqVar_{\indK} \overset{\Eeps\to 0^{+}}{=}
	\frac{1}{2} \big(-1 + \Fb_{\indK} + \nvarT + \sqrt{
	   4 \nvarT + (-1 + \Fb_{\indK} + \nvarT)^2}\,\big),
	   	\label{eq:EqVar_closedform}
	\end{equation}
	where $\EqVar_{\indK}=\EqVar_{\indK,\indL} \;\forall \indL=1,\ldots,\Lk$
	and we denoted $\Fb_{\indK} = \frac{\indK}{\NB} \Fb$
	for notational convenience.  Thus, given Claim~\ref{claim:decoup_mmse}
	holds, the MMSE of the block sparse system 
	is given by
	\begin{equation}
	\label{eq:MMSE_block_sparse}
	\MSE(\nvarT) =
			\sum_{\indK=1}^{\K}
			\GMweight_{\indK}
			\frac{\indK}{\NB} \frac{\EqVar_{\indK}}{\EqVar_{\indK}+1},
		\end{equation}
	in the large system limit.
\end{prop}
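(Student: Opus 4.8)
The plan is to prove the three assertions in turn, exploiting throughout the fact that the equivalent channel \eqref{eq:awgn_channel} has a \emph{diagonal} signal covariance $\CovD_{\indK,\indL}$ and \emph{white} noise, so that it decouples into $\N$ scalar Gaussian channels. First I would evaluate the per-component MSE \eqref{eq:mse_su_defn} directly from the linear estimate \eqref{eq:mmse_output_vector}. Since $\CovD_{\indK,\indL}$ and $\EqVarr\I_{\N}$ are simultaneously diagonal, the estimator acts entrywise and, by the orthogonality principle, the total error energy is $\tr(\CovD_{\indK,\indL})-\tr[\CovD_{\indK,\indL}^{2}(\CovD_{\indK,\indL}+\EqVarr\I_{\N})^{-1}]$, which is a sum of scalar terms $d\EqVarr/(d+\EqVarr)$ over the diagonal entries $d$ of $\CovD_{\indK,\indL}$. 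With $\xvar=1$ on $\indK\Blen$ of these entries and $\Eeps$ on the remaining $(\NB-\indK)\Blen$, dividing by $\N=\Blen\NB$ cancels $\Blen$ and produces \eqref{eq:K-largeN-mse}, exactly and for every $\N$.

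Second, I would specialize the fixed-point equation \eqref{eq:fp-su-xi} to the strictly block sparse case. Letting $\Eeps\to0^{+}$ annihilates the second summand of \eqref{eq:K-largeN-mse} -- legitimately, since $\EqVarr\geq\nvarT>0$ (from \eqref{eq:fp-su-xi} and non-negativity of the MSE) keeps the factor $\EqVarr/(\EqVarr+\Eeps)$ bounded -- leaving, with $v\triangleq\EqVar_{\indK}$ and $\Fb_{\indK}=\frac{\indK}{\NB}\Fb$, the scalar equation $v=\nvarT+\Fb_{\indK}\,v/(v+1)$. Clearing denominators gives the quadratic $v^{2}+(1-\Fb_{\indK}-\nvarT)v-\nvarT=0$, whose roots multiply to $-\nvarT<0$; hence exactly one root is positive, and retaining it yields the closed form \eqref{eq:EqVar_closedform}. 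Because the reduced equation involves the mixture index only through $\indK$, one reads off $\EqVar_{\indK,\indL}=\EqVar_{\indK}$ for all $\indL$, as asserted.

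Finally, I would assemble the block sparse MMSE. Invoking Claim~\ref{claim:decoup_mmse} gives $\MSE(\nvarT)\to\MSEsu(\{\EqVarr\})$ with $\MSEsu$ as in \eqref{eq:mse_su_avg_defn}. In the limit $\Eeps\to0^{+}$ each summand reduces to $\frac{\indK}{\NB}\EqVar_{\indK}/(\EqVar_{\indK}+1)$, which no longer depends on $\indL$; pulling it out of the inner sum and applying $\sum_{\indL=1}^{\Lk}\GMweight_{\indK,\indL}=\GMweight_{\indK}$ from \eqref{eq:GMparameters} collapses \eqref{eq:mse_su_avg_defn} to \eqref{eq:MMSE_block_sparse}.

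These manipulations are elementary; the single point requiring care is the limit in the second stage, namely the interchange of $\Eeps\to0^{+}$ with the resolution of the fixed-point equation -- that is, that the fixed point of the $\Eeps$-dependent map in \eqref{eq:fp-su-xi} converges to the fixed point of its pointwise limit. I expect this to be the main (though minor) obstacle; it is dispatched by noting that the right-hand side of \eqref{eq:fp-su-xi} is continuous and increasing in $\EqVarr$, uniformly for small $\Eeps$ on the relevant range, so the fixed point varies continuously with $\Eeps$ and the two limits commute.
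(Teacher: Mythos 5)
Your proposal is correct and follows exactly the route the paper intends: the paper itself gives no detailed proof (it states that, given Claim~\ref{claim:decoup_mmse}, ``a bit of algebra gives the following proposition''), and your three steps---entrywise evaluation of the diagonal Gaussian MMSE to get \eqref{eq:K-largeN-mse}, solving the quadratic fixed point of \eqref{eq:fp-su-xi} as $\Eeps\to 0^{+}$ to get \eqref{eq:EqVar_closedform}, and collapsing the inner sum over $\indL$ via $\sum_{\indL}\GMweight_{\indK,\indL}=\GMweight_{\indK}$ to get \eqref{eq:MMSE_block_sparse}---are precisely that algebra. Your added care about interchanging $\Eeps\to 0^{+}$ with the fixed-point resolution is a sound refinement of a point the paper glosses over.
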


\begin{remark}
	Note that the MSE is independent of the 
	distribution 
	$\{\GMweight_{\indK,\indL}\}_{\indL}^{\Lk}$
	that makes up 	$\GMweight_{\indK}$ (see \eqref{eq:GMparameters}).
\end{remark}
\begin{remark}
	As $\Eeps\to 0^{+}$, the noise variance 
	\eqref{eq:EqVar_closedform} becomes the 
	\emph{Tse-Hanly solution}
	\cite{Tse-Hanly-1999}
	for equal power users but
	\emph{with a modified user load} 
	$\Fb_{\indK} = \frac{\indK}{\NB} \Fb$. 
	The same noise variance is obtained by a genie-aided MMSE receiver,
	conditioned on the event that $\vecXT$ is sampled 
	from one of the $\Lk$ mixtures indexed by $\indK$.
	The MSE \eqref{eq:MMSE_block_sparse}, on the other hand, 
	is a summation of
	the related MSEs but weighted with the probability of 
	having $\indK$ non-zero blocks in a
	realization of the source vector $\vecXT$.
	Thus, there is no loss in not knowing the 
	positions of the zero blocks in advance 
	if we use the optimum
	MMSE receiver for very large $\K$ block sparse systems.  
	Note that the equivalent AWGN channel model in
	Sec.~\ref{sec:eq_awgn_channel} already implies this point.
	For practical settings with
	finite sized sensing matrices, however, 
	this does not strictly hold.	
\end{remark}

\begin{cor}
	The MMSE estimator
	\eqref{eq:optimum_mmse_w_A} has the same MSE
	in the large system limit 
	as a genie-aided MMSE estimator that
	knows in advance the positions of the non-zero
	blocks in $\vecXT$.
	\label{cor:mmse_equivalence}
\end{cor}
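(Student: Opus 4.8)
The plan is to compute the MSE of the genie-aided estimator directly and show it coincides, term by term, with the expression \eqref{eq:MMSE_block_sparse} for the original estimator established in Proposition~\ref{prop:mse}. First I would characterize what the genie provides: revealing the locations of the non-zero blocks is equivalent to conditioning on the event that $\vecXT$ was drawn from a fixed mixture component indexed by $(\indK,\indL)$. In the strict block-sparse limit $\Eeps\to 0^{+}$ the $(\NB-\indK)$ inactive blocks vanish identically and carry no information, so the genie-aided receiver faces a standard dense Gaussian estimation problem: recover an $\indK\Blen$-dimensional zero-mean Gaussian signal with covariance $\xvar\I=\I$ from the $\M$ noisy measurements in \eqref{eq:sysmodel_w_A}, now restricted to the active columns of $\mtxA$ (which remain IID Gaussian).

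Second, I would invoke the classical large-system decoupling result for IID Gaussian inputs \cite{Tanaka-2002, Guo-Verdu-2005}, which maps this problem to an equivalent AWGN channel whose effective noise variance satisfies the Tse--Hanly fixed point \cite{Tse-Hanly-1999}. The key point is that the load seen by the genie is determined by the active dimension, $\indK\Blen/\M=(\indK/\NB)\Fb=\Fb_{\indK}$, which is exactly the modified load appearing in \eqref{eq:EqVar_closedform}. Consequently the genie's effective noise variance equals $\EqVar_{\indK}$ and the per-active-component MMSE is the equal-power Tse--Hanly value $\EqVar_{\indK}/(\EqVar_{\indK}+1)$, independent of $\indL$.

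Third, I would pass to the per-component MSE over the full length-$\N$ vector. Since only $\indK\Blen$ of the $\N=\NB\Blen$ coordinates incur error while the remaining zero coordinates are recovered exactly, the genie MSE for pattern $(\indK,\indL)$ equals $(\indK\Blen/\N)\,\EqVar_{\indK}/(\EqVar_{\indK}+1)=(\indK/\NB)\,\EqVar_{\indK}/(\EqVar_{\indK}+1)$. Averaging over the sparsity pattern, configuration $(\indK,\indL)$ occurs with probability $\GMweight_{\indK,\indL}$; because the per-pattern MSE depends on $\indL$ only through $\indK$, summing over $\indL$ collapses the weights via $\sum_{\indL=1}^{\Lk}\GMweight_{\indK,\indL}=\GMweight_{\indK}$, yielding a genie MSE of $\sum_{\indK=1}^{\K}\GMweight_{\indK}\,(\indK/\NB)\,\EqVar_{\indK}/(\EqVar_{\indK}+1)$. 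This is identical to \eqref{eq:MMSE_block_sparse}, which proves the claim.

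The main obstacle is the second step: rigorously justifying that the genie-aided problem decouples to a Tse--Hanly equivalent AWGN channel with load exactly $\Fb_{\indK}$, and in particular that the $\Eeps\to 0^{+}$ limit commutes with the large-system limit so that the inactive coordinates may be dropped before applying the decoupling result. Once the effective load is correctly identified as $\Fb_{\indK}$ --- rather than the ambient $\Fb$ --- the remaining comparison to \eqref{eq:MMSE_block_sparse} is purely algebraic.
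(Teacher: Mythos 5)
Your proposal is correct and takes essentially the same route as the paper: the paper disposes of this corollary by observing (in the remark preceding it) that the per-pattern equivalent AWGN channels produced by Claim~\ref{claim:decoup_mmse} and evaluated in Proposition~\ref{prop:mse} --- the Tse--Hanly solution with modified load $\Fb_{\indK} = \frac{\indK}{\NB}\Fb$ --- coincide with what a genie-aided MMSE receiver conditioned on the mixture index achieves, and that averaging over the pattern probabilities collapses the $\indL$-sum to $\GMweight_{\indK}$, exactly as in your third step. Your explicit reduction of the genie-aided problem to the active-coordinate subproblem with load $\Fb_{\indK}$ simply fills in the detail that the paper asserts without computation.
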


To empirically verify the analytical results, 
we have plotted in Fig.~\ref{F:plot} the MSE of  
estimator \eqref{eq:optimum_mmse_w_A}, obtained 
via computer simulations. 
The theoretical MSE given in Proposition~\ref{prop:mse} is 
given as well.  
In all simulation cases we have set $\GMweight_{\indK,\indL} =
\GMweight = 1 / \sum_{\indK=1}^{\K} \Lk$ so that
$\GMweight_{\indK}  = \GMweight\Lk$.
For the selected cases the theory matches 
Monte Carlo simulations very well.
\begin{figure}[t]
\centering
\includegraphics[width=1\columnwidth]{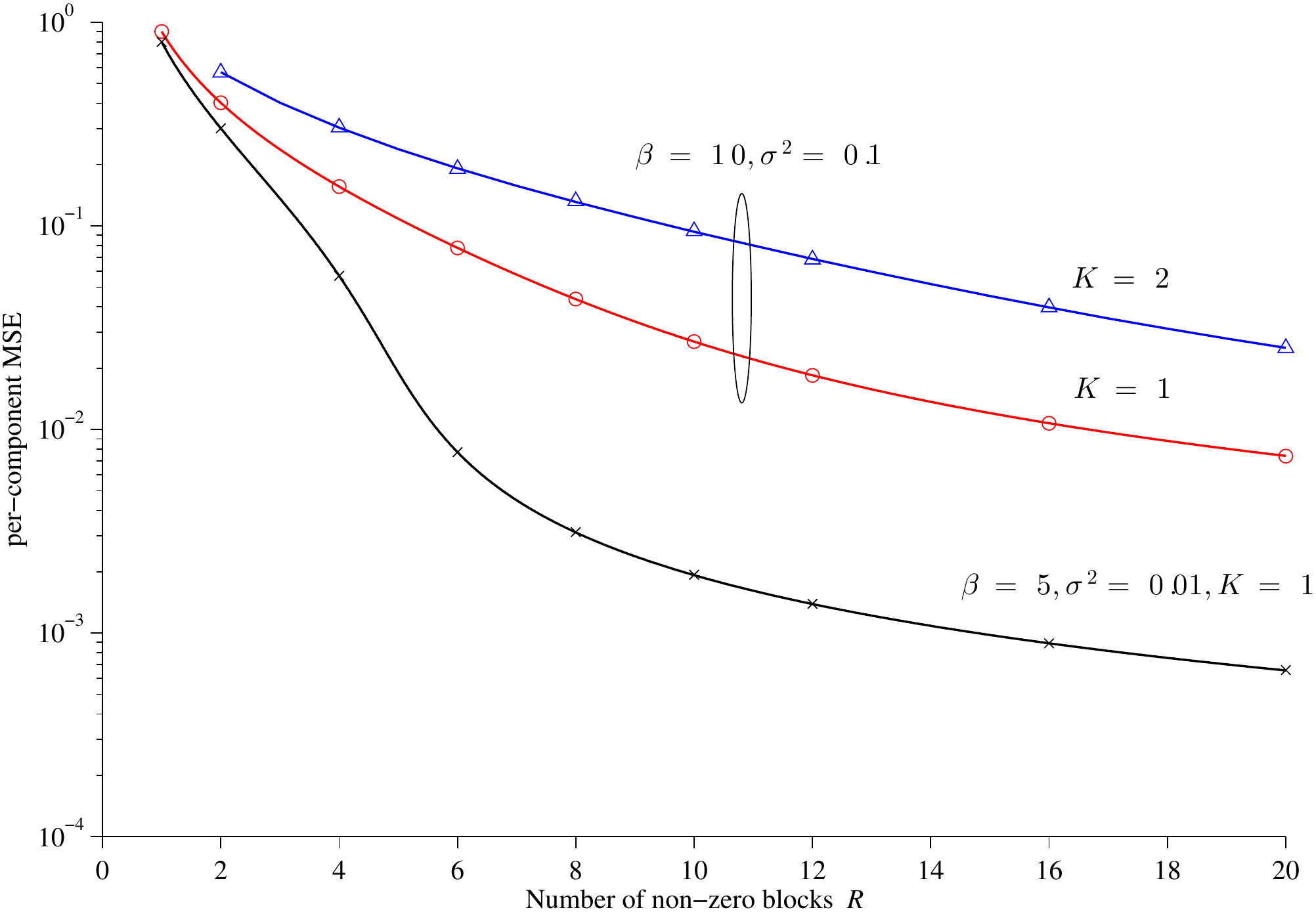}
\caption{Per-component MMSE for the CS of block sparse signal. 
Solid lines are obtained from Proposition~\ref{prop:mse} 
with $\Eeps\to 0^{+}$ and markers 
depict numerical Monte Carlo simulations where 
$\xvar=1$, $\Eeps = 10^{-6}$ and  $\N=1200$.}
\label{F:plot}
\end{figure}

\section{Conclusions}

Minimum mean square error estimation of 
block sparse signals 
from noisy linear measurements was considered. 
The main result of the paper is the
closed-form MMSE for the CS of such signals.  
The solution turned out to be of a particularly simple form, namely,
the Tse-Hanly formula with a scaling by parameters 
that depend on the sparsity pattern of the source.
The result implies that if the statistics of the 
block sparse CS problem are known, the MMSE
is independent of the knowledge about 
the positions of the non-vanishing blocks.

\appendix[Derivation of Claim~\ref{claim:decoup_mmse}]

Let us assume that the free energy density
\eqref{eq:FreeEd}
is self-averaging w.r.t. the quenched randomness $(\mtxA,\vecYT)$
in the large system limit $\N\to\infty$.  
Then \eqref{eq:FreeEd} can be written as
\begin{IEEEeqnarray}{rCl}
\FreeEd 
&=& \lim_{\N\to\infty} 
\frac{1}{\N} 
\lim_{\NR\to 0} \frac{\partial}{\partial \NR}
\log \E_{\vecYT,\mtxA} \{ \partFcond^{\NR}\},
\IEEEeqnarraynumspace
\end{IEEEeqnarray}
where $\NR$ is a real parameter.
The \emph{replica trick} consists of 
treating $\NR$ as an integer while calculating the 
expectations, but taking the limit as if 
$\NR$ was real valued outside the expectation.  
The second step is to exchange the limits and
write the power of $\NR$ inside the expectation using the set 
$\{\vecXT_{[\nr]}\}_{\nr=1}^{\NR}$ of
replicated random vectors, resulting to,
\begin{IEEEeqnarray}{rCl}
  \FreeEd_{\mathsf{rm}}
&=& 
\lim_{\NR\to 0} \frac{\partial}{\partial \NR}
\lim_{\N\to\infty} 
\frac{1}{\N}
\log \AXiNR,
\label{eq:FreeEd_2}
\end{IEEEeqnarray}
where 
$\vecXT_{[\nr]}$ are IID with density
$\PDF(\vecXT)$ and
\begin{equation}
\AXiNR = 
\E_{\vecYT, \mtxA}
\left\{
\int 
\prod_{\nr=1}^{\NR}
\PDF(\vecXT_{[\nr]}) 
\e^{\AlambdaV^{\trans} \vecXT_{[\nr]}}
\PDF(\vecYT\mid\mtxA, \vecXT_{[\nr]})
\dx \vecXT_{[\nr]}\right\}.
\label{eq:AXi_1}
\end{equation}
Unfortunately, these steps are non-rigorous and there is no 
general proof yet
under which conditions $\FreeEd_{\mathsf{rm}}$ equals $\FreeEd$.  
For more discussion and details, see, e.g.,
\cite{Tanaka-2002, Guo-Verdu-2005, Muller-2003, 
Moustakas-Simon-2003, MGM-2008, vehkapera-thesis-short-2010}.

Let $\vecXT_{[0]}$ be the true vector of interest, 
independent of $\{\vecXT_{[\nr]}\}_{\nr=1}^{\NR}$
and distributed as $\vecXT$.   Plugging
$\vecYT = \mtxA \vecXT_{[0]} + \nvecT$ 
to \eqref{eq:AXi_1}, the
average over the additive noise vector
$\nvecT\sim\GpdfM(\nvecT \mid \vm{0};\, \nvarT\I_{\M})$
can be assessed using \eqref{eq:Gint1}. 
Furthermore, recalling that the true and postulated source vectors have
GM densities \eqref{eq:pdf_of_x}, 
we obtain \eqref{eq:AXi_1_2} at the top of the next page,
where $\E^{(\indK,\indL)}_{\{\vecXT_{[\nr]}\}}$ denotes 
expectation over the vectors 
$\vecXT_{[\nr]} \sim \GpdfN(\vecXT_{[\nr]}\mid \vm{0};\,\CovD_{\indK,\indL}), \nr = 0, 1, \ldots,\NR$.
\begin{figure*}[!t]
\normalsize
\begin{IEEEeqnarray}{rCl}
\AXiNR &=& 
\bigg[\frac{(\NR+1)^{-1}}{(2 \pi \nvarT)^{\NR}}\bigg]^{\frac{\M}{2}}
\sum_{\indK=1}^{\K}
\sum_{\indL=1}^{\Lk}
\GMweight_{\indK,\indL} 
\E^{(\indK,\indL)}_{\{\vecXT_{[\nr]}\}}
\Bigg\{ 
\E_{\mtxA}\Bigg\{
\exp
\bigg(
-\frac{1}{2\nvarT (\NR+1)}
\bigg[\sum_{\nr=0}^{\NR}
\NR\| \mtxA \vecXT_{[\nr]}\|^{2}
- 
\sum_{\nr=0}^{\NR}
\sum_{\nrb \neq \nr}
(\mtxA\vecXT_{[\nr]})^{\trans}
(\mtxA\vecXT_{[\nrb]})
\bigg]
\bigg)
\Bigg\} \Bigg\} 
\IEEEnonumber\\
\label{eq:AXi_1_2}
\vspace*{-1em}
\end{IEEEeqnarray}
\hrulefill
\vspace*{-1em}
\end{figure*}
Next, let
\begin{equation}
\vecV=[(\vecV_{0})^{\trans}  \, \cdots\;
(\vecV_{\NR})^{\trans}]^{\trans}\in\R^{\M(\NR+1)},
\end{equation}
be a RV composed of $\NR+1$ sub-vectors
$\vecV_{\nr} = \Fb^{-1/2} \mtxA \vecXT_{[\nr]} \in\R^{\M}$.
Also denote $\mtxQr = \mtxQu_{\indK,\indL} \otimes \I_{M}$ where
$\mtxQu_{\indK,\indL} \in \R^{(\NR+1)\times(\NR+1)}$
and the $(\nr,\nrb)$th element of $\mtxQu_{\indK,\indL}$ 
is given by
$\mtxQEle_{\indK,\indL}^{[\nr,\nrb]} = \vecXT_{[\nr]}^{\trans} 
\vecXT^{}_{[\nrb]} / \N$ where 
$\vecXT_{[\nr]}, \vecXT_{[\nrb ]}\sim 
\GpdfN(\vecXT\mid \vm{0};\,\CovD_{\indK,\indL})
$ for all $\nr,\nrb = 0,1,\ldots,\NR$.
Then, \eqref{eq:AXi_1_2} can be written in the form
\begin{IEEEeqnarray}{rCl}
\AXiNR &=&
\bigg[\frac{(2 \pi \nvarT)^{-\NR}}{\NR+1}\bigg]^{\frac{\M}{2}}
\sum_{\indK=1}^{\K}
\sum_{\indL=1}^{\Lk}
\GMweight_{\indK,\indL} \IEEEnonumber\\
&& \times\E^{(\indK,\indL)}_{\{\vecXT_{[\nr]}\}}
\Big\{ 
\E_{\vecV\sim\GpdfM(\vecV\mid \vm{0};\, \mtxQr)}
\Big\{
\e^{- \frac{1}{2}
\vecV^{\trans}\AmtxA \vecV
}
\Big\} \Big\}, \IEEEeqnarraynumspace
\label{eq:AXi_1_4}
\end{IEEEeqnarray}
where 
$\AmtxA = 
(\Fb/\nvarT)
 \left[\I_{\NR} - \onesV_{\NR}\onesV_{\NR}^{\trans} / (1+\NR)
 \right]\in \R^{(\NR+1)\times(\NR+1)}$.
Using \eqref{eq:Gint1} to integrate over the Gaussian RV $\vecV$ in 
\eqref{eq:AXi_1_4} yields
\begin{IEEEeqnarray}{l}
\AXiNR \IEEEnonumber\\
\;= 
\sum_{\indK=1}^{\K} \sum_{\indL=1}^{\Lk}
\GMweight_{\indK,\indL} 
\E^{(\indK,\indL)}_{\{\vecXT_{[\nr]}\}}
\left\{
\e^{\N \Fb^{-1} \funcGu(\mtxQu_{\indK,\indL})}
\prod_{\nr=1}^{\NR}\e^{\AlambdaV^{\trans} \vecXT_{[\nr]}}
\right\}, \IEEEeqnarraynumspace
\label{eq:AXi_3}
\end{IEEEeqnarray}
where 
\begin{IEEEeqnarray}{l}
\e^{\funcGu(\mtxQu_{\indK,\indL})}
=
\sqrt{
\frac{(2 \pi\nvarT)^{-\NR}}{(1+\NR)
\det(\I + \AmtxA\mtxQu_{\indK,\indL})}}.
\IEEEeqnarraynumspace
\end{IEEEeqnarray}

To compute the expectations w.r.t.,
$\{\vecXT_{[\nr]}\}_{\nr=0}^{\NR}$ 
for all $\indK = 1,\ldots,\K, \indL = 1,\ldots,\Lk$
in \eqref{eq:AXi_3}, 
we write the measure of 
the matrix $\mtxQu_{\indK,\indL}$ as
\begin{IEEEeqnarray}{l}
  \mu_{\N} (\mtxQu_{\indK,\indL}) 
  \IEEEnonumber\\
  \; =
\E^{(\indK,\indL)}_{\{\vecXT_{[\nr]}\}}\Bigg\{
\prod_{\nr=1}^{\NR}
\e^{\AlambdaV^{\trans}\vecXT_{[\nr]}} 
   \prod_{\nr\leq\nrb}\delta\left( 
  \vecXT_{[\nr]}^{\trans} \vecXT^{}_{[\nrb]}
  = \N \mtxQEle_{\indK,\indL}^{[\nr,\nrb]}
  \right)\Bigg\}, \IEEEeqnarraynumspace
  \label{eq:meas_Q_dirac}
\end{IEEEeqnarray}
and integrate w.r.t. \eqref{eq:meas_Q_dirac}.
Writing the Dirac measures in \eqref{eq:meas_Q_dirac} in terms of
(inverse) Laplace transform 
and invoking saddle point integration
(see \cite[Appendix~A]{MGM-2008} for details), 
we get
\begin{IEEEeqnarray}{l}
\AXiNR
= 
\sum_{\indK=1}^{\K} \sum_{\indL=1}^{\Lk}
\GMweight_{\indK,\indL} \e^{\N\funcTur(\AlambdaV)},
\label{eq:AXi_4}
\end{IEEEeqnarray}
where $\mtxQutil_{\indK,\indL}\in\R^{(\NR+1)\times(\NR+1)}$ 
is a symmetric matrix.  
To obtain \eqref{eq:AXi_4}, we defined an auxiliary function
\begin{IEEEeqnarray}{l}
\funcTur(\AlambdaV) = 
\sup_{\mtxQu_{\indK,\indL}}\Big\{
\Fb^{-1} \funcGu(\mtxQu_{\indK,\indL})  \IEEEnonumber\\
\;  - \inf_{\mtxQutil_{\indK,\indL}}\Big\{
\tr(\mtxQu_{\indK,\indL} \mtxQutil_{\indK,\indL}) - 
\lim_{\N\to\infty}\N^{-1} 
\log \AMGFN (\mtxQutil_{\indK,\indL}, \AlambdaV; \N)
\Big\}
\bigg\}, \IEEEnonumber\\
\label{eq:optfunc_1}
\end{IEEEeqnarray}
where
\begin{IEEEeqnarray}{l}
\AMGFN (\mtxQutil_{\indK,\indL}, \AlambdaV; \N)  \IEEEnonumber\\
\quad = \E^{(\indK,\indL)}_{\{\vecXT_{[\nr]}\}} \left\{
\e^{\AlambdaV^{\trans}\vecXT_{[\nr]}} 
\e^{
\tr\left[
\mtxQutil_{\indK,\indL} \sum_{\indN=1}^{\N} \vecXT^{(\NR)}_{\Nind}(\vecXT^{(\NR)}_{\Nind})^{\trans} 
\right]}
\right\},  \IEEEeqnarraynumspace
	\label{eq:mod_mgf}
\end{IEEEeqnarray}
denotes the moment generating function (MGF) of \eqref{eq:meas_Q_dirac}.
We also wrote 
$\vecXT^{(\NR)}_{\Nind} = [\symbolX_{[0],\indN} \, \cdots\; \symbolX_{[\NR],\indN}]\in\R^{\NR+1}$, 
where $\Nind = (\nb-1)\Blen+\blen$ for 
$\nb = 1,\ldots,\NB$ and $\blen = 1,\ldots,\Blen$ in the notation of 
\eqref{eq:x_block_vector}.

To make the  optimization problems in \eqref{eq:optfunc_1} tractable, we assume that their solutions 
are the \emph{replica symmetric} (RS) matrices
(see, e.g., \cite{Tanaka-2002, Guo-Verdu-2005, Muller-2003, 
Moustakas-Simon-2003, MGM-2008, vehkapera-thesis-short-2010} on discussion
about this assumption)
\begin{IEEEeqnarray}{rCl}
	\label{eq:Qstar1}
	\mtxQuS_{\indK,\indL} &=&
	(\Qaa_{\indK,\indL}-\Qab_{\indK,\indL})\I_{\NR+1} + \Qab_{\indK,\indL}\onesV_{\NR+1}^{}\onesV_{\NR+1}^{\trans},
	\IEEEeqnarraynumspace \\
	\mtxQutilS_{\indK,\indL} &=&
	(\Qtilaa_{\indK,\indL}-\Qtilab_{\indK,\indL})\I_{\NR+1} +
	\Qtilab_{\indK,\indL}\onesV_{\NR+1}^{}\onesV_{\NR+1}^{\trans},
	\IEEEeqnarraynumspace
\label{eq:Qstar2}
\end{IEEEeqnarray}
respectively, where $\Qaa_{\indK,\indL},
\Qab_{\indK,\indL},\Qtilaa_{\indK,\indL}, \Qtilab_{\indK,\indL}$ 
are real parameters.
Under the RS assumption, we get the simplifications
\begin{IEEEeqnarray}{l}
\tr(\mtxQu_{\indK,\indL} \mtxQutil_{\indK,\indL})
=
(\NR+1)(\Qaa_{\indK,\indL}\Qtilaa_{\indK,\indL} + 
\NR\Qab_{\indK,\indL}\Qtilab_{\indK,\indL}) \\
\quad \xrightarrow{\NR\to 0} 0,
\end{IEEEeqnarray}
and
\begin{IEEEeqnarray}{rCl}
\funcGu(\Qaa_{\indK,\indL},\Qab_{\indK,\indL})
&=& 
-\frac{\NR}{2} \log 
[\nvarT + \Fb (\Qaa_{\indK,\indL} 
	- \Qab_{\indK,\indL})] \IEEEnonumber\\
&&-\frac{\NR}{2} \log (2 \pi\nvarT)
-\frac{1}{2} \log (\NR + 1) \\
\xrightarrow{\NR\to 0} 0. &&
\end{IEEEeqnarray}
From the first extremum in \eqref{eq:optfunc_1} one obtains
$\Qtilaa_{\indK,\indL} = 0$ and
\begin{equation}
	\Qtilab_{\indK,\indL} = \left[\nvarT + \Fb (\Qaa_{\indK,\indL} 
	- \Qab_{\indK,\indL})\right]^{-1},
\end{equation}
where $\Qaa_{\indK,\indL}$ and $\Qab_{\indK,\indL}$ are left as arbitrary but fixed parameters for now.
To proceed with the second optimization problem in \eqref{eq:optfunc_1},
we need to evaluate the MGF \eqref{eq:mod_mgf} under the RS assumption.

Using \eqref{eq:Gint1} right-to-left in \eqref{eq:mod_mgf},
using the RS assumption \eqref{eq:Qstar1}~--~\eqref{eq:Qstar2}
and recalling that the replicas $\{\vecXT_{[\nr]}\}_{\nr=0}^{\NR}$ are 
IID yields after some algebra
\begin{IEEEeqnarray}{l}
\AMGFN (\Qtilab_{\indK,\indL}, \AlambdaV; \N) 
= \cMGFuN(\Qtilab_{\indK,\indL})
\int \E^{(\indK,\indL)}_{\vecXT}\left\{ 
\GpdfN(\vecZT_{\indK,\indL} \mid \vecXT;\,
\Qtilab_{\indK,\indL}^{-1}\I_{\N}) \right\}
 \IEEEnonumber\\
\qquad\times
\left[\E^{(\indK,\indL)}_{\vecX}\left\{\e^{\AlambdaV^{\trans}\vecX}
\GpdfN(\vecZT_{\indK,\indL} \mid \vecX;\,
 \Qtilab_{\indK,\indL}^{-1}\I_{\N}) \right\}\right]^{\NR}
\dx \vecZT_{\indK,\indL}, \IEEEeqnarraynumspace
	\label{eq:mod_mgf_2}
\end{IEEEeqnarray}
where the expectations 
$\E^{(\indK,\indL)}$ are w.r.t.
zero-mean Gaussian RVs with covariance 
$\CovD_{\indK,\indL}$.
The normalization factor
$\cMGFuN(\Qtilab_{\indK,\indL}) 
= \big[(1+\NR) (2 \pi / \Qtilab_{\indK,\indL})^{\NR}\big]^{\N/2}$ 
is due the introduction of the Gaussian densities in 
\eqref{eq:mod_mgf_2}.  Since 
  \begin{equation}
\AMGFN (\Qtilab_{\indK,\indL}, \AlambdaV; \N)
\xrightarrow{\NR\to 0} 1,
\label{eq:MGF_at_u_zero}
\end{equation}
the second optimization in \eqref{eq:optfunc_1}
reduces to the conditions
\begin{IEEEeqnarray}{rCl}
\label{eq:sp-cond_1}
\Qaa_{\indK,\indL} = \lim_{\N\to\infty}
\N^{-1}\,
\E^{(\indK,\indL)}
\left\{\|\vecXT\|^{2} \right\}, \IEEEeqnarraynumspace\\
\Qab_{\indK,\indL} = \lim_{\N\to\infty}
\N^{-1}
\E^{(\indK,\indL)}
\left\{
\|\langle\vecX\rangle^{(\indK,\indL)}_{\PDF}\|^{2}
\right\}, \IEEEeqnarraynumspace
\label{eq:sp-cond_2}
\end{IEEEeqnarray}
when $\AlambdaV = \vm{0}$ and $\NR\to 0$.  
The expectations in 
\eqref{eq:sp-cond_1}~and~\eqref{eq:sp-cond_2}
are w.r.t 
$\vecZT_{\indK,\indL}$, and the independent zero-mean
Gaussian RVs  $\vecXT,\vecX$ with covariance 
$\CovD_{\indK,\indL}$ as in \eqref{eq:mod_mgf_2}.
We also write 
\begin{equation}
	\PDF_{\N}(\vecZT_{\indK,\indL}) = 
	\E^{(\indK,\indL)}_{\vecX}\{
	\GpdfN(\vecZT_{\indK,\indL} \mid
	\vecX;\,\Qtilab_{\indK,\indL}^{-1}\I_{\N})\},
\end{equation}
so that
\begin{IEEEeqnarray}{rCl}
\label{eq:su_mmse_estimator}
\langle\vecX\rangle^{(\indK,\indL)}_{\PDF} &=&
	\frac{1}{\PDF(\vecZT_{\indK,\indL})}
	\E^{(\indK,\indL)}_{\vecX} \big\{
	\vecX \, 
	\GpdfN(\vecZT_{\indK,\indL} \mid \vecX;\,\Qtilab_{\indK,\indL}^{-1}\I_{\N}) 
	\big\}. \IEEEeqnarraynumspace	
\end{IEEEeqnarray}
Note that \eqref{eq:su_mmse_estimator} is the MMSE 
estimator of the Gaussian channel
\begin{equation}
\vecZT_{\indK,\indL} = \vecXT + \vm{\nvecsu}_{\indK,\indL} 
\in \R^{\cwL},  
\label{eq:awgn_channel_appendix}
\end{equation}
when the receiver knows the correct distributions of
$\vm{\nvecsu}_{\indK,\indL} \sim 
\GpdfN(\vm{\nvecsu} \mid \vm{0};\, \Qtilab_{\indK,\indL}^{-1}\I_{\cwL})
$ and
$\vecXT
\sim 
\GpdfN(\vecXT \mid \vm{0};\, \CovD_{\indK,\indL})
$.
Furthermore, from \eqref{eq:sp-cond_1} and \eqref{eq:sp-cond_2}
we get 
\begin{IEEEeqnarray}{rCl}
 \MSEsu^{(\indK,\indL)}
 (\Qtilab_{\indK,\indL}) &=& 	\lim_{\N\to\infty} \frac{1}{\N} \big[
	\E^{(\indK,\indL)}\{\|\vecXT\|^2\} 
	- \E^{(\indK,\indL)}\{\|\slangle \vecX \srangle^{(\indK,\indL)}_{\PDF}\|^2
	\}\big] 
	\IEEEnonumber\\
	&=& \Qaa_{\indK,\indL} - \Qab_{\indK,\indL},
	\label{eq:mmse_of_eq_channel}
\end{IEEEeqnarray}
where $\MSEsu^{(\indK,\indL)}$ is the MMSE of the Gaussian
channel \eqref{eq:awgn_channel_appendix}.

The free energy density under the RS assumption reads 
\begin{equation}
 \FreeEd_{\mathsf{rm-rs}}
 = 
 \lim_{\NR\to 0} 
 \frac{\partial}{\partial \NR}
\lim_{\N\to\infty} 
\frac{1}{\N}
\log \Bigg(
\sum_{\indK=1}^{\K} \sum_{\indL=1}^{\Lk}
\GMweight_{\indK,\indL} \e^{\N\funcTur(\AlambdaV)}
\Bigg).
\end{equation}
Switching the order of the limits once more yields
\begin{IEEEeqnarray}{rCl}
 \FreeEd_{\mathsf{rm-rs}} &=&
\lim_{\N\to\infty} 
\frac{1}{\N}
\lim_{\NR\to 0} \Bigg\{
\bigg(\sum_{\indK=1}^{\K} \sum_{\indL=1}^{\Lk}
\GMweight_{\indK,\indL} \e^{\N\funcTur(\AlambdaV)}\bigg)^{-1}
\IEEEeqnarraynumspace\IEEEnonumber\\
&&\qquad\quad \times\bigg[\sum_{\indK=1}^{\K} \sum_{\indL=1}^{\Lk}
\GMweight_{\indK,\indL} 
\bigg(
\frac{\partial}{\partial \NR}
\e^{\N\funcTur(\AlambdaV)}\bigg)\bigg]
\Bigg\}.
\end{IEEEeqnarray}
Since
\begin{equation}
\funcTur(\AlambdaV)
\xrightarrow{\NR\to 0} 0,
\label{optfunc_at_u_zero}
\end{equation}
by \eqref{eq:GMparameters} the denominator becomes just 
unity and can be omitted.
For the latter part,
\begin{equation}
\frac{\partial}{\partial \NR} \e^{\N\funcTur(\AlambdaV)}
 = 
 \N \e^{\N\funcTur(\AlambdaV)}
 \bigg(
 \frac{\partial}{\partial \NR} \funcTur(\AlambdaV)\bigg),
\end{equation}
where the derivative is assessed
\begin{IEEEeqnarray}{l}
\lim_{\NR\to 0}\frac{\partial}{\partial \NR} \funcTur(\AlambdaV)
= -\frac{1}{2\Fb}\log[\nvarT + \Fb (\Qaa_{\indK,\indL} 
	- \Qab_{\indK,\indL})]
	- \Qab_{\indK,\indL}\Qtilab_{\indK,\indL}
 \IEEEnonumber\\
\qquad+ \frac{1}{2} \left[1 + \log (2\pi/\Qtilab_{\indK,\indL}) \right] 
-\frac{1}{2\Fb}\log(2 \pi \e \nvarT)
\IEEEnonumber\\ 
\qquad + \lim_{\N\to\infty} \frac{1}{\N} \int 
\PDF_{\N}(\vecZT_{\indK,\indL})
\funcHN(\vecZT_{\indK,\indL},\AlambdaV,\Qtilab_{\indK,\indL})
\dx \vecZT_{\indK,\indL}.
\end{IEEEeqnarray}
Note that we defined above the function 
\begin{IEEEeqnarray}{l}
\funcHN(\vecZT_{\indK,\indL},\AlambdaV,\Qtilab_{\indK,\indL}) 
\IEEEnonumber\\
\qquad \qquad \!=
\log\big(\E^{(\indK,\indL)}_{\vecX}\big\{\e^{\AlambdaV^{\trans}\vecX}
\GpdfN(\vecZT_{\indK,\indL} \mid \vecX;\,
 \Qtilab_{\indK,\indL}^{-1}\I_{\N}) \big\}\big),
\IEEEeqnarraynumspace
\end{IEEEeqnarray}
for notational convenience.
Recalling \eqref{optfunc_at_u_zero}, we finally have
the RS free energy density
\begin{IEEEeqnarray}{l}
\FreeEd_{\mathsf{rm-rs}}
= \frac{1}{2} \big[1 - \Fb^{-1}
\log(2 \pi \e \nvarT) 
\big] \IEEEnonumber\\
+ \frac{1}{2}\sum_{\indK=1}^{\K} \sum_{\indL=1}^{\Lk}
\GMweight_{\indK,\indL} 
\big[
\log (2\pi/\Qtilab_{\indK,\indL})
+\Fb^{-1}\log\Qtilab_{\indK,\indL}
-2\Qab_{\indK,\indL}\Qtilab_{\indK,\indL} 
  \big]
\IEEEnonumber\\
+\!
\lim_{\N\to\infty} \frac{1}{\N} 
\sum_{\indK=1}^{\K} \sum_{\indL=1}^{\Lk}
\GMweight_{\indK,\indL} \!
\int \!
\PDF_{\N}(\vecZT_{\indK,\indL})
\funcHN(\vecZT_{\indK,\indL},\AlambdaV,\Qtilab_{\indK,\indL})
\dx \vecZT_{\indK,\indL}, \IEEEeqnarraynumspace
\label{eq:D_log_T_2}
\end{IEEEeqnarray}
where only the last term depends on $\AlambdaV$ and is relevant
for the assessment of the MSE, as given in \eqref{eq:nabla_cov}.

The final task is to compute
$\nabla^{2}_{\AlambdaV\AlambdaV} \funcHN(\vecZT,\AlambdaV,\EqVar)
\big|_{\AlambdaV=\vm{0}}$.
First,
\begin{equation}
\nabla_{\AlambdaV}
\funcHN(\vecZT_{\indK,\indL},\AlambdaV,\Qtilab_{\indK,\indL})
=\frac{
\E^{(\indK,\indL)}_{\vecX}\big\{
\vecX \e^{\AlambdaV^{\trans}\vecX}
\GpdfN(\vecZT_{\indK,\indL} \mid \vecX;\,
 \Qtilab_{\indK,\indL}^{-1}\I_{\N}) \big\}
}
{\E^{(\indK,\indL)}_{\vecX}\big\{\e^{\AlambdaV^{\trans}\vecX}
\GpdfN(\vecZT_{\indK,\indL} \mid \vecX;\,
 \Qtilab_{\indK,\indL}^{-1}\I_{\N}) \big\}},
\end{equation}
so that the estimator \eqref{eq:su_mmse_estimator}
can also be written as
\begin{equation}
\langle\vecX\rangle^{(\indK,\indL)}_{\PDF}
= 
\nabla_{\AlambdaV}
\funcHN(\vecZT_{\indK,\indL},\AlambdaV,
\Qtilab_{\indK,\indL})\big|_{\AlambdaV=\vm{0}}.
\end{equation}
Proceeding similarly, after a bit of algebra we obtain
the conditional covariance matrix of the error 
\begin{IEEEeqnarray}{rCl}
	\vm{E}^{(\indK,\indL)}_{\N}(\vecZT_{\indK,\indL}) &=& 
	\nabla^{2}_{\AlambdaV} \funcHN(\vecZT_{\indK,\indL},\AlambdaV,
	\Qtilab_{\indK,\indL})\Big|_{\AlambdaV=\vm{0}} \IEEEnonumber\\
	&=& \slangle\vecX \vecX^{\trans}\srangle^{(\indK,\indL)}_{\PDF}
	- \slangle \vecX \srangle^{(\indK,\indL)}_{\PDF}
	\big[\slangle \vecX \srangle^{(\indK,\indL)}_{\PDF}
	\big]^{\trans},
\end{IEEEeqnarray}
which is also the error covariance of the estimator 
\eqref{eq:su_mmse_estimator}.
Thus, by \eqref{eq:nabla_cov}, the per-component MSE 
of the original MMSE estimator given in 
Proposition~\ref{thm:mmse_estimator} reads 
\begin{IEEEeqnarray}{l}
\MSE  \IEEEnonumber\\
\;= 
\sum_{\indK=1}^{\K} \sum_{\indL=1}^{\Lk}
\GMweight_{\indK,\indL} \!
\lim_{\N\to\infty} \frac{1}{\N} 
\tr\Bigg(
\int \!
\PDF_{\N}(\vecZT_{\indK,\indL})
\vm{E}^{(\indK,\indL)}_{\N}(\vecZT_{\indK,\indL})
\dx \vecZT_{\indK,\indL}
\Bigg), \IEEEnonumber\\
\end{IEEEeqnarray}
which can be written due to \eqref{eq:sp-cond_1}, \eqref{eq:sp-cond_2}
and \eqref{eq:mmse_of_eq_channel} as
\begin{equation}
\MSE = 
\sum_{\indK=1}^{\K} \sum_{\indL=1}^{\Lk}
\GMweight_{\indK,\indL} 
\MSEsu^{(\indK,\indL)} (\Qtilab_{\indK,\indL}),
\end{equation}
completing the proof.

\bibliography{./jour_short,./conf_short,./references_short}

\begin{thebibliography}{10}
\providecommand{\url}[1]{#1}
\csname url@samestyle\endcsname
\providecommand{\newblock}{\relax}
\providecommand{\bibinfo}[2]{#2}
\providecommand{\BIBentrySTDinterwordspacing}{\spaceskip=0pt\relax}
\providecommand{\BIBentryALTinterwordstretchfactor}{4}
\providecommand{\BIBentryALTinterwordspacing}{\spaceskip=\fontdimen2\font plus
\BIBentryALTinterwordstretchfactor\fontdimen3\font minus
  \fontdimen4\font\relax}
\providecommand{\BIBforeignlanguage}[2]{{%
\expandafter\ifx\csname l@#1\endcsname\relax
\typeout{** WARNING: IEEEtran.bst: No hyphenation pattern has been}%
\typeout{** loaded for the language `#1'. Using the pattern for}%
\typeout{** the default language instead.}%
\else
\language=\csname l@#1\endcsname
\fi
#2}}
\providecommand{\BIBdecl}{\relax}
\BIBdecl

\bibitem{Candes-Romberg-Tao-2006}
E.~J. Candes, J.~Romberg, and T.~Tao, ``Robust uncertainty principles: Exact
  signal reconstruction from highly incomplete frequency information,''
  \emph{IEEE Trans. Inform. Theory}, vol.~52, no.~2, pp. 489--509, Feb. 2006.

\bibitem{Donoho-2006}
D.~L. Donoho, ``Compressed sensing,'' \emph{IEEE Trans. Inform. Theory},
  vol.~52, no.~4, pp. 1289--1306, Apr. 2006.

\bibitem{Stojnic-2010}
M.~Stojnic, ``$\ell_{2}/\ell_{1}$ -optimization in block-sparse compressed
  sensing and its strong thresholds,'' \emph{IEEE J. Select. Areas Commun.},
  vol.~4, no.~2, pp. 350--357, Apr. 2010.

\bibitem{Stojnic-Parvaresh-Hassibi-2009}
M.~Stojnic, F.~Parvaresh, and B.~Hassibi, ``On the reconstruction of
  block-sparse signals with an optimal number of measurements,'' \emph{IEEE
  Trans. Signal Processing}, vol.~57, no.~8, pp. 3075--3085, Aug. 2009.

\bibitem{Eldar-Kuppinger-Bolcskei-2010}
Y.~C. Eldar, P.~Kuppinger, and H.~B{\"o}lcskei, ``Block-sparse signals:
  Uncertainty relations and efficient recovery,'' \emph{IEEE Trans. Signal
  Processing}, vol.~58, no.~6, pp. 3042--3054, Jun. 2010.

\bibitem{Baraniuk-2010}
R.~G. Baraniuk, V.~Cevher, M.~F. Duarte, and C.~Hegde, ``Model-based
  compressive sensing,'' \emph{IEEE Trans. Inform. Theory}, vol.~56, no.~4, pp.
  1982--2001, Apr. 2010.

\bibitem{Tse-Hanly-1999}
D.~N.~C. Tse and S.~V. Hanly, ``Linear multiuser receivers: Effective
  interference, effective bandwidth and user capacity,'' \emph{IEEE Trans.
  Inform. Theory}, vol.~45, no.~2, pp. 641--657, Mar. 1999.

\bibitem{Moustakas-Simon-2003}
A.~L. Moustakas, S.~H. Simon, and A.~M. Sengupta, ``{MIMO} capacity through
  correlated channels in the presence of correlated interferers and noise: A
  (not so) large {$N$} analysis,'' \emph{IEEE Trans. Inform. Theory}, vol.~49,
  no.~10, pp. 2545--2561, Oct. 2003.

\bibitem{Muller-2003}
R.~R. M\"uller, ``Channel capacity and minimum probability of error in large
  dual antenna array systems with binary modulation,'' \emph{IEEE Trans. Signal
  Processing}, vol.~51, no.~11, pp. 2821--2828, Nov. 2003.

\bibitem{Tanaka-2002}
T.~Tanaka, ``A statistical-mechanics approach to large-system analysis of
  {CDMA} multiuser detectors,'' \emph{IEEE Trans. Inform. Theory}, vol.~48,
  no.~11, pp. 2888--2910, Nov. 2002.

\bibitem{Guo-Verdu-2005}
D.~Guo and S.~Verd{\'u}, ``Randomly spread {CDMA}: Asymptotics via statistical
  physics,'' \emph{IEEE Trans. Inform. Theory}, vol.~51, no.~6, pp. 1983--2010,
  Jun. 2005.

\bibitem{MGM-2008}
R.~R. M\"uller, D.~Guo, and A.~Moustakas, ``Vector precoding for wireless
  {MIMO} systems and its replica analysis,'' \emph{IEEE J. Select. Areas
  Commun.}, vol.~26, no.~3, pp. 530--540, Apr. 2008.

\bibitem{vehkapera-thesis-short-2010}
M.~Vehkaper{\"a}, ``Statistical physics approach to design and analysis of
  multiuser systems under channel uncertainty,'' Ph.D. dissertation, NTNU,
  Trondheim, Norway, Aug. 2010, on-line
  \url{http://urn.kb.se/resolve?urn=urn:nbn:no:ntnu:diva-8138}.

\bibitem{Rangan-Fletcher-Goyal-isit-2009}
S.~Rangan, A.~K. Fletcher, and V.~K. Goyal, ``Extension of replica analysis to
  {MAP} estimation with applications to compressed sensing,'' in \emph{Proc.
  IEEE Int. Symp. Inform. Theory}, Austin, TX, USA, Jun. 13 -- 18 2010, pp.
  1543--1547.

\bibitem{Guo-Baron-Shamai-allerton2009}
D.~Guo, D.~Baron, and S.~Shamai, ``A single-letter characterization of optimal
  noisy compressed sensing,'' in \emph{Proc. Annual Allerton Conf. Commun.,
  Contr., Computing}, Monticello, IL, USA, Sep. 30 -- Oct. 2 2009, pp. 52--59.

\bibitem{Kabashima-Wadayama-Tanaka-isit2010}
Y.~Kabashima, T.~Wadayama, and T.~Tanaka, ``Statistical mechanical analysis of
  a typical reconstruction limit of compressed sensing,'' in \emph{Proc. IEEE
  Int. Symp. Inform. Theory}, Austin, TX, USA, Jun. 13 -- 18 2010, pp.
  1533--1537.

\bibitem{Tanaka-Raymond-2010}
T.~Tanaka and J.~Raymond, ``Optimal incorporation of sparsity information by
  weighted $\ell_{1}$ optimization,'' in \emph{Proc. IEEE Int. Symp. Inform.
  Theory}, Austin, TX, USA, Jun. 13 -- 18 2010, pp. 1598--1602.

\bibitem{Montanari-book-2011}
A.~Montanari, ``Graphical models concepts in compressed sensing,'' to appear in
  \emph{Compressed Sensing: Theory and Applications}, Y. Eldar and G. Kutyniok
  (eds.), Cambridge University Press, 2011.

\bibitem{Merhav-isit-2010}
N.~Merhav, ``Optimum estimation via partition functions and information
  measures,'' in \emph{Proc. IEEE Int. Symp. Inform. Theory}, Austin, TX, USA,
  Jun. 13 -- 18 2010, pp. 1473--1477.

\end{thebibliography}

\bibliographystyle{IEEEtran}

\end{document}